\newtheorem{Th}{Theorem}[section]
\newtheorem{theorem}[Th]{Theorem}
\newtheorem{proposition}[Th]{Proposition} 
\newtheorem{lemma}[Th]{Lemma}
\newtheorem{corollary}[Th]{Corollary}
\newtheorem{definition}[Th]{Definition} 
\newtheorem{example}[Th]{Example}
\newtheorem{remark}[Th]{Remark}
\pgfplotsset{compat=1.17}
\tikzstyle{env}=[copoint,regular polygon rotate=0,minimum width=0.2cm, fill=black]
\tikzstyle{dot}=[inner sep=0.7mm,minimum width=0pt,minimum height=0pt,fill=black,draw=black,shape=circle]
\tikzstyle{every picture}=[baseline=-0.25em]
\tikzstyle{dotpic}=[scale=0.5]
\tikzstyle{diredges}=[every to/.style={diredge}]
\tikzstyle{dot graph}=[shorten <=-0.1mm,shorten >=-0.1mm,scale=0.6]
\tikzstyle{plot point}=[circle,fill=black,minimum width=2mm,inner sep=0]
\tikzstyle{dtriangle}=[fill=yellow,draw=black,shape=isosceles triangle,shape border rotate=-90,isosceles triangle stretches=true,inner sep=0.8pt,minimum width=0.25cm,minimum height=2mm]
\tikzstyle{vtriang}=[fill=yellow,draw=black,shape=isosceles triangle,shape border rotate=180,isosceles triangle stretches=true,inner sep=0.8pt,minimum width=0.25cm,minimum height=2mm]
\tikzstyle{trigmc}=[fill=green,draw=black,shape=isosceles triangle,shape border rotate=90,isosceles triangle stretches=true,inner sep=0.8pt,minimum width=0.3cm,minimum height=2mm]
\tikzstyle{vrt}=[fill=yellow,draw=black,shape=isosceles triangle,shape border rotate=0,isosceles triangle stretches=true,inner sep=0.8pt,minimum width=0.25cm,minimum height=2mm]
\tikzstyle{H box}=[rectangle,fill=yellow,draw=black,xscale=0.8,yscale=0.8, inner sep=0.6pt]
\tikzstyle{H box}=[rectangle,fill=yellow,draw=black,xscale=1.5,yscale=1.5, inner sep=1.6pt]
\tikzstyle{gbox}=[gn_phase, rounded corners=0]
\tikzstyle{rbox}=[rn_phase, rounded corners=0]
\tikzstyle{zhbx}=[rectangle,fill=white,draw=black,xscale=1.0,yscale=1.0, inner sep=1.6pt]
\tikzstyle{newh}=[rectangle,fill=yellow,draw=black,xscale=1.5,yscale=1.5, inner sep=1.6pt]
\tikzstyle{triangle}=[fill=yellow,draw=black,shape=isosceles triangle,shape border rotate=90,isosceles triangle stretches=true,inner sep=0.8pt,minimum width=0.25cm,minimum height=2mm]
\tikzstyle{vtriangle}=[triangle, rotate=180]
\definecolor{zx_red}{RGB}{232, 165, 165}
\definecolor{zx_green}{RGB}{216, 248, 216}
\tikzstyle{gn}=[dot,inner sep=0pt,minimum width=2mm,fill=zx_green]
\tikzstyle{rn}=[gn,fill=zx_red]
\tikzstyle{gn_phase}=[minimum size=1.2em, font={\footnotesize\boldmath}, shape=rectangle, rounded corners=0.5em, inner sep=0.2em, outer sep=-0.2em, scale=0.8, draw=black, fill=zx_green]
\tikzstyle{rn_phase}=[gn_phase, fill=zx_red]
\tikzstyle{dbspider}=[fill=black,draw=black,scale=1,shape=isosceles triangle,shape border rotate=90,isosceles triangle stretches=true,inner sep=1pt,minimum width=0.4cm,minimum height=3mm]
\tikzstyle{braceedge}=[decorate,decoration={brace,amplitude=2mm,raise=-1mm}]
\tikzstyle{wire label}=[font=\tiny, auto]
\tikzstyle{box}=[fill=white]
\tikzstyle{cdiag}=[matrix of math nodes, row sep=3em, column sep=3em, text height=1.5ex, text depth=0.25ex,inner sep=0.5em]
\tikzstyle{arrow above}=[transform canvas={yshift=0.5ex}]
\tikzstyle{arrow below}=[transform canvas={yshift=-0.5ex}]
\newacronym{opug}{OPUG}{One-Parameter Unitary Group (OPUG)}
\newcommand{\vast}{\bBigg@{6.5}}
\title{How~to~Sum~and~Exponentiate~Hamiltonians~in~ZXW~Calculus}
\author{Razin A. Shaikh \and Quanlong Wang \and Richie Yeung \and
\institute{Quantinuum, 17 Beaumont Street, Oxford OX1 2NA, United Kingdom}}
\begin{document}
\date{\today}\maketitle
\begin{abstract}
This paper develops practical summation techniques in ZXW calculus to reason about quantum dynamics, such as unitary time evolution.
First we give a direct representation of a wide class of sums of linear operators, including arbitrary qubits Hamiltonians, in ZXW calculus. As an application, we demonstrate the linearity of the Schrödinger equation and give a diagrammatic representation of the Hamiltonian in Greene-Diniz et al~\cite{Gabriel2022}, which is the first paper that models carbon capture using quantum computing.
We then use the Cayley-Hamilton theorem to show in principle how to exponentiate arbitrary qubits Hamiltonians in ZXW calculus. Finally, we develop practical techniques and show how to do Taylor expansion and Trotterization diagrammatically for Hamiltonian simulation. This sets up the framework for using ZXW calculus to the problems in quantum chemistry and condensed matter physics.
\end{abstract}

%===============================================================

\section{Introduction}

ZX calculus \cite{CoeckeDuncan} is a graphical representation for quantum circuits and linear
algebra in general: a diagram with $n$ inputs and $m$ outputs represents a
$2^n \times 2^m$ linear map. ZX calculus comes with a complete set of rewrite
rules such that two diagrams that represent the same linear map can always be
rewritten to one another. Using this philosophy of
\textit{computation via rewriting}, ZX calculus has been successfully applied to a wide
range of problems in quantum computing, including circuit compilation \cite{Cowtan_2020,debeaudrapbianwang,duncan2020graph, Kissinger_2020}, 
circuit equality validation \cite{PhysRevA.102.022406, Lemonnier2021HypergraphSL},
circuit simulation \cite{Kissinger2022ClassicalSO}, error correction \cite{carette_et_al:LIPIcs:2019:10999, Chancellor2016GraphicalSF}, natural language processing~\cite{coecke2020qnlpfoundations,lorenz2021qnlp} and quantum machine learning~\cite{Stollenwerk2022PQC,yeung2020diagrammatic,Zhao2021analyzingbarren}.

To solve a problem using ZX calculus, or other diagrammatic
calculii, we first need to synthesise the expressions involved into diagrams.
This can be done in an ad-hoc way or using general methods such as elementary matrices \cite{wang2021representing}.
For example: in Hamiltonian simulation problems \cite{qcchemsurvey2020} the Hamiltonian is typically given as
a sum of Pauli operators, and without an efficient representation of such Hamiltonians in
ZX calculus, there is simply no way to proceed with the problem.

As of this writing, there is no published work on how to efficiently combine
sums of ZX diagrams. Both \cite{jeandel2022addition} and \cite{qwangnormalformbit}
% \cite{wang2022differentiating} and \cite{jeandel2022addition}
first require an inefficient
conversion step to an intermediate form before the diagrams can be summed
together. Furthermore, the resulting diagram is large and does not resemble the original
symbolic expression, so the advantage of using diagrammatic reasoning is diminished.

To obtain intuitive sums of diagrams, we introduce the framework of \emph{ZXW calculus}. Coecke and Kissinger~\cite{Coeckealeksmen} proposed the idea of developing a graphical calculus based on the interaction between GHZ and W states. Following up on that, Hadzihasanovic~\cite{amar,amar1} developed the ZW calculus. The ZW-calculus has found applications in linear optical quantum computing~\cite{hadzihasanovicDiagrammaticAxiomatisationFermionic2018, defeliceQuantumLinearOptics2022}, describing interactions in quantum field theory~\cite{shaikhFeynman2022}, and studying multi-partite entanglement~\cite{Coeckealeksmen}. In this paper, we combine the ZX and ZW calculii to create the \emph{ZXW calculus}. The W-spider of the ZXW calculus plays an important role in producing efficient and compact sums of diagrams. We define the ZXW calculus through a slight modification of the algebraic ZX calculus~\cite{wangalg2020}.

In this work, we modify and extend the notion of controlled states~\cite{jpvnormfmlics} to
give direct representations of controlled diagrams
for a wide class of matrices and show how to sum them within the framework of ZXW calculus. These representations, combined with the recently developed
techniques of differentiating arbitrary ZX diagrams~\cite{wang2022differentiating, jeandel2022addition},
allow us to
\emph{practically reason about analytical problems that were previously inaccessible to ZX calculus}. Moreover, the addition, exponentiation and differentiation operations interact nicely, allowing us to work with ZX diagrams as naturally as the traditional matrix notation, while still keeping the compact representation and rewriting advantages of the ZX calculus.

Specifically, these techniques allow us to reason about quantum dynamics and quantum chemistry in the ZXW calculus. Previous diagrammatic approaches~\cite{Gogioso2015Schrodinger,Gogioso2017Thesis,Gogioso2019Dynamics} to quantum dynamics have tackled the problem from an abstract categorical perspective. In this paper, we develop diagrammatic techniques to tackle problems of quantum chemistry and condensed matter physics in a concrete and practical way. We first devise a diagrammatic form for arbitrary Hamiltonians in the ZXW calculus. One of the main problems in quantum computational chemistry is that of Hamiltonian simulation: given a Hamiltonian, find an approximation to its unitary time evolution. The ideal evolution is given by $e^{-iHt/2}$, where $H$ is the Hamiltonian and $t$ is time. Using the diagrammatic form of Hamiltonian and the summation techniques developed in this paper, we show how to diagrammatically exponentiate Hamiltonians. This allows us to write the unitary time evolution graphically and apply the ZXW rewrite rules to extract a quantum circuit for Hamiltonian simulation.

\subsection*{Summary of results}
\begin{enumerate}
 \item \textbf{Representing sums of square matrices and arbitrary vectors}: We show how to represent a sum of any square matrices of size $2^m \times 2^m$ or any vectors of size $2^m $ in ZXW calculus  (Section~\ref{controlledingen} and Section~\ref{controlledrealise}). As an application, We formulate the Schr\"{o}dinger equation in
ZXW calculus and show the linearity of its solutions. (Section~\ref{Schrodingers})
    \item \textbf{Representing arbitrary Hamiltonians}: We show how to construct a diagram for any Hamiltonian defined on arbitrary number of qubits (Section~\ref{sec:Hamiltonians}). As an example, we express the Hamiltonian used in Greene-Diniz et. al.~\cite{Gabriel2022} (\autoref{carbonhamitonian}).
     \item \textbf{Representing Taylor expansion and Trotterization}: We show how Taylor expansion and Trotterization used for practical Hamiltonian exponentiation can be realised in ZXW calculus. (Section~\ref{sec:HamiltonianExponentiation})
   % \item \textbf{Time-evolution operator}: We convert between a Hamiltonian and its one-parameter unitary group, i.e. the time-evolution operator, using exponentiation % and differentiation 
   %  in ZXW calculus. (Section \ref{sec:HamiltonianExponentiation})
% Not only does this provide a graphical depiction of Stone's theorem,
% it also paves way to new techniques to evolve Hamiltonian systems that consist of
% non-commutative Pauli operators that appear in quantum chemistry.
  %  \item \textbf{Schr\"{o}dinger's Equation}: We formulate the Schr\"{o}dinger equation in
%ZXW calculus and show the linearity of its solutions. (Section \ref{Schrodingers}) \TODO{Change the summary. Schrodinger is not too important.}
\end{enumerate}

\section{ZXW Calculus}

In this section, we give an introduction to  ZXW calculus which is a slight modification of the algebraic ZX calculus \cite{wangalg2020}, including its generators and rewriting rules. Note that algebraic ZX calculus is complete  \cite{wangalg2020} for matrices of size $2^m\times 2^n$ and hence, so is the ZXW calculus. 
In this paper diagrams are either read from left to right or top to bottom.

\subsection{Generators}
The diagrams in  ZXW calculus are defined by freely combining the following generating objects. Note that $a$ can be any complex number.

$$
\scalebox{0.9}{%
	\beginpgfgraphicnamed{zxwintr/generalgreenspider}
	\InputIfFileExists{zxwintr/generalgreenspider.tikz}{}{\input{./figures/zxwintr/generalgreenspider.tikz}}%
	\endpgfgraphicnamed
}
\qquad \quad
	\beginpgfgraphicnamed{zxwintr/newhadamard}
	\begin{tikzpicture}
	\begin{pgfonlayer}{nodelayer}
		\node [style=newh] (0) at (0, 0) {};
		\node [style=none] (1) at (0, 0.5) {};
		\node [style=none] (2) at (0, -0.5) {};
	\end{pgfonlayer}
	\begin{pgfonlayer}{edgelayer}
		\draw (1.center) to (2.center);
	\end{pgfonlayer}
\end{tikzpicture}}%
	\endpgfgraphicnamed

\qquad \quad
	\beginpgfgraphicnamed{zxwintr/w1to2}
	\begin{tikzpicture}
	\begin{pgfonlayer}{nodelayer}
		\node [style=none] (0) at (-0.25, -0.5) {};
		\node [style=none] (1) at (0, 0.5) {};
		\node [style=none] (2) at (0.25, -0.5) {};
		\node [style=dbspider] (3) at (0, 0) {};
	\end{pgfonlayer}
	\begin{pgfonlayer}{edgelayer}
		\draw (1.center) to (3);
		\draw (3) to (0.center);
		\draw (3) to (2.center);
	\end{pgfonlayer}
\end{tikzpicture}}%
	\endpgfgraphicnamed

\qquad \quad
	\beginpgfgraphicnamed{zxwintr/swap}
	\InputIfFileExists{zxwintr/swap.tikz}{}{\input{./figures/zxwintr/swap.tikz}}%
	\endpgfgraphicnamed

\qquad \quad
	\beginpgfgraphicnamed{zxwintr/Id}
	\begin{tikzpicture}
	\begin{pgfonlayer}{nodelayer}
		\node [style=none] (1) at (0.5, 0.3) {};
		\node [style=none] (2) at (0.5, -0.3) {};
		\node [style=none] (3) at (0.5, -0.5) {};
		\node [style=none] (4) at (0.5, 0.5) {};
	\end{pgfonlayer}
	\begin{pgfonlayer}{edgelayer}
		\draw (1.center) to (2.center);
	\end{pgfonlayer}
\end{tikzpicture}}%
	\endpgfgraphicnamed

\qquad \quad
	\beginpgfgraphicnamed{zxwintr/cap1}
	\begin{tikzpicture}
	\begin{pgfonlayer}{nodelayer}
		\node [style=none] (0) at (-0.5, -0.25) {};
		\node [style=none] (1) at (0.5, -0.25) {};
		\node [style=none] (2) at (0, 0.5) {};
	\end{pgfonlayer}
	\begin{pgfonlayer}{edgelayer}
		\draw [bend left=90, looseness=1.50] (0.center) to (1.center);
	\end{pgfonlayer}
\end{tikzpicture}
}%
	\endpgfgraphicnamed

\qquad \quad
	\beginpgfgraphicnamed{zxwintr/cup1}
	\begin{tikzpicture}
	\begin{pgfonlayer}{nodelayer}
		\node [style=none] (0) at (-0.5, 0.25) {};
		\node [style=none] (1) at (0.5, 0.25) {};
		\node [style=none] (2) at (0, 0.5) {};
	\end{pgfonlayer}
	\begin{pgfonlayer}{edgelayer}
		\draw [bend right=90, looseness=1.50] (0.center) to (1.center);
	\end{pgfonlayer}
\end{tikzpicture}
}%
	\endpgfgraphicnamed

$$

\subsection{Additional notation}
For simplicity, we introduce additional notation based on the given generators:

1. The green spider from the original ZX calculus can be defined using the green box spider in  ZXW calculus.
\begin{equation*}
    \scalebox{0.9}{%
	\beginpgfgraphicnamed{zxwintr/convenientdenote1}
	\InputIfFileExists{zxwintr/convenientdenote1.tikz}{}{\input{./figures/zxwintr/convenientdenote1.tikz}}%
	\endpgfgraphicnamed
}
\end{equation*}
2. The triangle and the inverse triangle can be expressed as follows. The transposes of the triangle and the inverse triangle can be drawn as inverted triangles.
\begin{equation*}
    \scalebox{0.9}{
	\beginpgfgraphicnamed{zxwintr/convenienttriangles}
	\InputIfFileExists{zxwintr/convenienttriangles.tikz}{}{\input{./figures/zxwintr/convenienttriangles.tikz}}%
	\endpgfgraphicnamed
\hspace{1cm}
	\beginpgfgraphicnamed{zxwintr/convenientdenote3}
	\InputIfFileExists{zxwintr/convenientdenote3.tikz}{}{\input{./figures/zxwintr/convenientdenote3.tikz}}%
	\endpgfgraphicnamed
}
\end{equation*}
3. The red spider from the original ZX calculus can be defined by performing Hadamard conjugation on each leg of the green spider, and the pink spider is the algebraic equivalent of the red spider. It is only defined for $\tau \in \{ 0, \pi \}$, and is rescaled to have integer components in its matrix representation.
\begin{equation*}
    \scalebox{0.9}{%
	\beginpgfgraphicnamed{zxwintr/convenientdenote5v2}
	\InputIfFileExists{zxwintr/convenientdenote5v2.tikz}{}{\input{./figures/zxwintr/convenientdenote5v2.tikz}}%
	\endpgfgraphicnamed
}
\end{equation*}
4. The plus gate (also known as $V$ gate) and minus gate ( also known as $V^{\dagger}$ gate) can be used for constructing the Pauli $Y$ gate, they are defined as  $V = HSH$ and $V^{\dagger} = HS^{\dagger}H$ respectively:

\begin{equation*}
    \scalebox{0.9}{%
	\beginpgfgraphicnamed{zxwintr/plusandminus}
	\InputIfFileExists{zxwintr/plusandminus.tikz}{}{\input{./figures/zxwintr/plusandminus.tikz}}%
	\endpgfgraphicnamed
}
\end{equation*}
5. The ``and'' box is defined as the following.
\begin{equation*}
    \scalebox{0.9}{%
	\beginpgfgraphicnamed{zxwintr/anddef}
	\InputIfFileExists{zxwintr/anddef.tikz}{}{\input{./figures/zxwintr/anddef.tikz}}%
	\endpgfgraphicnamed
}
\end{equation*}

\subsection{Rules}
Now we give the rewriting rules of  ZXW calculus.
\[
\begin{array}{|cccc|}
\hline
	\beginpgfgraphicnamed{zxwintr/generalgreenspiderfusesym}
	\InputIfFileExists{zxwintr/generalgreenspiderfusesym.tikz}{}{\input{./figures/zxwintr/generalgreenspiderfusesym.tikz}}%
	\endpgfgraphicnamed
&(S1) &%
	\beginpgfgraphicnamed{zxwintr/s2new2}
	\InputIfFileExists{zxwintr/s2new2.tikz}{}{\input{./figures/zxwintr/s2new2.tikz}}%
	\endpgfgraphicnamed
 &(S2)\\
	\beginpgfgraphicnamed{zxwintr/induced_compact_structure}
	\InputIfFileExists{zxwintr/induced_compact_structure.tikz}{}{\input{./figures/zxwintr/induced_compact_structure.tikz}}%
	\endpgfgraphicnamed
&(S3) &%
	\beginpgfgraphicnamed{zxwintr/rdotaempty}
	\InputIfFileExists{zxwintr/rdotaempty.tikz}{}{\input{./figures/zxwintr/rdotaempty.tikz}}%
	\endpgfgraphicnamed
  &(Ept) \\
	\beginpgfgraphicnamed{zxwintr/b1ring}
	\InputIfFileExists{zxwintr/b1ring.tikz}{}{\input{./figures/zxwintr/b1ring.tikz}}%
	\endpgfgraphicnamed
&(B1)  &%
	\beginpgfgraphicnamed{zxwintr/b2ring}
	\InputIfFileExists{zxwintr/b2ring.tikz}{}{\input{./figures/zxwintr/b2ring.tikz}}%
	\endpgfgraphicnamed
&(B2)\\ 
	\beginpgfgraphicnamed{zxwintr/rpicopyns}
	\InputIfFileExists{zxwintr/rpicopyns.tikz}{}{\input{./figures/zxwintr/rpicopyns.tikz}}%
	\endpgfgraphicnamed
 &(B3)&%
	\beginpgfgraphicnamed{zxwintr/anddflipns}
	\InputIfFileExists{zxwintr/anddflipns.tikz}{}{\input{./figures/zxwintr/anddflipns.tikz}}%
	\endpgfgraphicnamed
&(Brk) \\
 & &&\\
	\beginpgfgraphicnamed{zxwintr/triangleocopy}
	\InputIfFileExists{zxwintr/triangleocopy.tikz}{}{\input{./figures/zxwintr/triangleocopy.tikz}}%
	\endpgfgraphicnamed
 &(Bas0) &%
	\beginpgfgraphicnamed{zxwintr/trianglepicopyns}
	\InputIfFileExists{zxwintr/trianglepicopyns.tikz}{}{\input{./figures/zxwintr/trianglepicopyns.tikz}}%
	\endpgfgraphicnamed
&(Bas1)\\
	\beginpgfgraphicnamed{zxwintr/plus1}
	\InputIfFileExists{zxwintr/plus1.tikz}{}{\input{./figures/zxwintr/plus1.tikz}}%
	\endpgfgraphicnamed
&(Suc)&%
	\beginpgfgraphicnamed{zxwintr/triangleinvers}
	\InputIfFileExists{zxwintr/triangleinvers.tikz}{}{\input{./figures/zxwintr/triangleinvers.tikz}}%
	\endpgfgraphicnamed
  & (Inv) \\
   & &&\\
	\beginpgfgraphicnamed{zxwintr/zerotoredns}
	\InputIfFileExists{zxwintr/zerotoredns.tikz}{}{\input{./figures/zxwintr/zerotoredns.tikz}}%
	\endpgfgraphicnamed
&(Zero)&%
	\beginpgfgraphicnamed{zxwintr/eunoscalar2}
	\InputIfFileExists{zxwintr/eunoscalar2.tikz}{}{\input{./figures/zxwintr/eunoscalar2.tikz}}%
	\endpgfgraphicnamed
&(EU) \\
	\beginpgfgraphicnamed{zxwintr/lemma4}
	\InputIfFileExists{zxwintr/lemma4.tikz}{}{\input{./figures/zxwintr/lemma4.tikz}}%
	\endpgfgraphicnamed
&(Sym) & %
	\beginpgfgraphicnamed{zxwintr/associate}
	\InputIfFileExists{zxwintr/associate.tikz}{}{\input{./figures/zxwintr/associate.tikz}}%
	\endpgfgraphicnamed
 &(Aso)\\ 
 & &&\\
	\beginpgfgraphicnamed{zxwintr/TR1314combine2}
	\InputIfFileExists{zxwintr/TR1314combine2.tikz}{}{\input{./figures/zxwintr/TR1314combine2.tikz}}%
	\endpgfgraphicnamed
&(Pcy) &%
	\beginpgfgraphicnamed{zxwintr/wdecomp}
	\InputIfFileExists{zxwintr/wdecomp.tikz}{}{\input{./figures/zxwintr/wdecomp.tikz}}%
	\endpgfgraphicnamed
&(Wdc)\\
  		  		\hline
\end{array}\]  
where $a, b \in \mathbb C$. The vertically flipped versions of the rules are assumed to hold as well.

\subsection{Interpretation}
Although the generators in ZXW calculus are formal mathematical objects in their own right, in the context of this paper we interpret the generators as linear maps, so each ZXW diagram is equivalent to a vector or matrix.
\[
	\beginpgfgraphicnamed{zxwintr/generalgreenspider}
	\InputIfFileExists{zxwintr/generalgreenspider.tikz}{}{\input{./figures/zxwintr/generalgreenspider.tikz}}%
	\endpgfgraphicnamed
 =\ket{0}^{\otimes m}\bra{0}^{\otimes n}+a\ket{1}^{\otimes m}\bra{1}^{\otimes n}\qquad
	\beginpgfgraphicnamed{zxwintr/newhadamard}
	}%
	\endpgfgraphicnamed
=\frac{1}{\sqrt{2}}\begin{pmatrix}
       1 & 1 \\
       1 & -1
\end{pmatrix}
\qquad
	\beginpgfgraphicnamed{zxwintr/w1to2}
	}%
	\endpgfgraphicnamed
=\begin{pmatrix}
        1 & 0 \\
        0 & 1 \\
        0 & 1 \\
        0 & 0
 \end{pmatrix}
\]
\[
	\beginpgfgraphicnamed{zxwintr/Id}
	}%
	\endpgfgraphicnamed
=\begin{pmatrix}
       1 & 0 \\
       0 & 1
\end{pmatrix}\qquad
	\beginpgfgraphicnamed{zxwintr/swap}
	\InputIfFileExists{zxwintr/swap.tikz}{}{\input{./figures/zxwintr/swap.tikz}}%
	\endpgfgraphicnamed
=\begin{pmatrix}
        1 & 0 & 0 & 0 \\
        0 & 0 & 1 & 0 \\
        0 & 1 & 0 & 0 \\
        0 & 0 & 0 & 1 
 \end{pmatrix}\qquad
	\beginpgfgraphicnamed{zxwintr/cap1}
	}%
	\endpgfgraphicnamed
=\begin{pmatrix}
        1  \\
        0  \\
        0  \\
        1  \\
 \end{pmatrix}\qquad
	\beginpgfgraphicnamed{zxwintr/cup1}
	}%
	\endpgfgraphicnamed
=\begin{pmatrix}
        1 & 0 & 0 & 1 
         \end{pmatrix},
\]

\begin{remark}
Due to the associative rule (Aso),  we can define the $W$ spider
and give its interpretation as follows  \cite{qwangqditzw}:
\[%
	\beginpgfgraphicnamed{zxwintr/mlegsblackspider}
	\InputIfFileExists{zxwintr/mlegsblackspider.tikz}{}{\input{./figures/zxwintr/mlegsblackspider.tikz}}%
	\endpgfgraphicnamed
 \qquad
	\beginpgfgraphicnamed{zxwintr/mlegsblackspiderone}
	\InputIfFileExists{zxwintr/mlegsblackspiderone.tikz}{}{\input{./figures/zxwintr/mlegsblackspiderone.tikz}}%
	\endpgfgraphicnamed
=\underbrace{\ket{0\cdots0}}_{m}\bra{0}+\sum_{k=1}^{m}\overbrace{\ket{\underbrace{0\cdots 0}_{k-1} 1 0\cdots 0}}^{m}\bra{1}.
\]
As a consequence, we have
\begin{equation}\label{mlegsblackspider0and1eq}
	\beginpgfgraphicnamed{zxwintr/mlegsblackspider0and1}
	\InputIfFileExists{zxwintr/mlegsblackspider0and1.tikz}{}{\input{./figures/zxwintr/mlegsblackspider0and1.tikz}}%
	\endpgfgraphicnamed

\end{equation}
\end{remark}

\section{Controlled diagrams}\label{controlledingen}
We start by stating the definitions of controlled states and matrices, and how
to perform operations on them. Note that our definition of controlled states are slightly different from that of Jeandel et al \cite{jeandel2022addition}, as we send $\ket{0}$  to $\ket{0}^{\otimes n}$
instead of $\ket{+}^{\otimes n}$. %Jeandel et al. also have a notion of controlled matrices by applying the map-state duality to the controlled state, but it is not equivalent to our definition, which maps $\ket{0}$ to $\ket{+}^{\otimes n} \bra{+}^{\otimes n}$ instead of the identity matrix.
They also have a notion of controlled matrices
by applying the map-state duality to the controlled state,  which maps $\ket{0}$ to $\ket{+}^{\otimes n} \bra{+}^{\otimes n}$ instead of the identity matrix as used in our definition, so their definition is not equivalent to ours.
\begin{definition} [Controlled matrix]
    The controlled matrix $\widetilde{M}$ corresponding to the matrix $M$ is a diagram
    \begin{equation*}
        \scalebox{0.9}{%
	\beginpgfgraphicnamed{controlled/ControlledMatrixDefinition0}
	\InputIfFileExists{controlled/ControlledMatrixDefinition0.tikz}{}{\input{./figures/controlled/ControlledMatrixDefinition0.tikz}}%
	\endpgfgraphicnamed
}
        \qquad
        \text{such that}
        \qquad
        \scalebox{0.9}{%
	\beginpgfgraphicnamed{controlled/ControlledMatrixDefinition1}
	\InputIfFileExists{controlled/ControlledMatrixDefinition1.tikz}{}{\input{./figures/controlled/ControlledMatrixDefinition1.tikz}}%
	\endpgfgraphicnamed
}
        \qquad
        \text{and}
        \qquad
        \scalebox{0.9}{%
	\beginpgfgraphicnamed{controlled/ControlledMatrixDefinition2}
	\InputIfFileExists{controlled/ControlledMatrixDefinition2.tikz}{}{\input{./figures/controlled/ControlledMatrixDefinition2.tikz}}%
	\endpgfgraphicnamed
}
    \end{equation*}
\end{definition}

\begin{definition} [Controlled state]
    The controlled state $\widetilde{V}$ corresponding to the state $V$ is a diagram
    \begin{equation*}
        \scalebox{0.9}{%
	\beginpgfgraphicnamed{controlled/ControlledStateDefinition0}
	\InputIfFileExists{controlled/ControlledStateDefinition0.tikz}{}{\input{./figures/controlled/ControlledStateDefinition0.tikz}}%
	\endpgfgraphicnamed
}
        \qquad
        \text{such that}
        \qquad
        \scalebox{0.9}{%
	\beginpgfgraphicnamed{controlled/ControlledStateDefinition1}
	\InputIfFileExists{controlled/ControlledStateDefinition1.tikz}{}{\input{./figures/controlled/ControlledStateDefinition1.tikz}}%
	\endpgfgraphicnamed
}
        \qquad
        \text{and}
        \qquad
        \scalebox{0.9}{%
	\beginpgfgraphicnamed{controlled/ControlledStateDefinition2}
	\InputIfFileExists{controlled/ControlledStateDefinition2.tikz}{}{\input{./figures/controlled/ControlledStateDefinition2.tikz}}%
	\endpgfgraphicnamed
}
    \end{equation*}
\end{definition}

Now, we show how to construct control diagrams for sums and products of matrices, given the controlled diagrams for the original matrices.
\begin{proposition}[Controlled product of matrices]\label{ProductControlledMatrix}
    Given controlled matrices $\widetilde{M_1}, \ldots, \widetilde{M_k}$ corresponding to matrices $M_1, \ldots, M_k$, the controlled matrix for $\prod_i M_i$ is given by
    \begin{equation}
        \scalebox{0.9}{%
	\beginpgfgraphicnamed{controlled/ControlledMatrixProduct}
	\InputIfFileExists{controlled/ControlledMatrixProduct.tikz}{}{\input{./figures/controlled/ControlledMatrixProduct.tikz}}%
	\endpgfgraphicnamed
}
    \end{equation}
\end{proposition}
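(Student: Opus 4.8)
The plan is to use the fact that a controlled matrix is characterised by its two defining equations: to prove that the displayed diagram, call it $D$, is the controlled matrix of $\prod_i M_i$, it suffices to check that \emph{(i)} controlling $D$ on $\ket 0$ returns the identity matrix, and \emph{(ii)} controlling $D$ on $\ket 1$ returns $\prod_i M_i$. These are exactly the two requirements in the definition of $\widetilde{(\,\cdot\,)}$, so the argument reduces to two diagrammatic computations, one for each computational-basis value of the control leg. There is also a small well-definedness point — that a diagram satisfying these two conditions is what we mean by ``the'' controlled matrix of $\prod_i M_i$ — but this is already implicit in the Definition.

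First I would resolve how the single control leg of $D$ is shared among the $k$ sub-diagrams $\widetilde{M_1},\dots,\widetilde{M_k}$, whether by threading one wire through all of them or, as drawn, by distributing it through a green spider. Using spider fusion $(S1)$ and the fact that the green spider copies computational basis states, feeding $\ket 0$ (resp.\ $\ket 1$) into this spider produces $\ket 0$ (resp.\ $\ket 1$) on each of the $k$ internal control wires; if the construction also carries control outputs, a second green spider merges them and the same rules force the merged value to be $\ket 0$ (resp.\ $\ket 1$) with no surviving cross terms. With the control legs resolved, case \emph{(i)} follows by invoking the first defining equation of each $\widetilde{M_i}$ (controlling on $\ket 0$ yields the identity): $D$ collapses to a serial composite of $k$ bare identity wires on the data register, i.e.\ the identity. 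Case \emph{(ii)} follows by invoking the second defining equation of each $\widetilde{M_i}$ (controlling on $\ket 1$ yields $M_i$): $D$ collapses to the serial composite $M_1 M_2\cdots M_k$ on the register, read in the orientation of the diagram, which is $\prod_i M_i$.

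The spider bookkeeping is routine; the only genuine content is recognising that computational-basis copying of the control, followed by serial composition of the $\widetilde{M_i}$, realises precisely the two required boundary conditions. The step I expect to need the most care is case \emph{(ii)}: checking that the $\ket 1$ control value propagates \emph{simultaneously and independently} into every $\widetilde{M_i}$, so that the $M_i$ compose with no leftover control wires or spurious scalar factors, and that the composition order matches the convention used for $\prod_i M_i$. This is where the nonstandard choice noted before the Definition — sending $\ket 0$ to the identity matrix rather than to $\ket{+}^{\otimes n}\bra{+}^{\otimes n}$ — keeps the argument clean, since no correction terms are introduced.
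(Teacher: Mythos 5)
Your proposal is correct and matches the paper's own (very terse) justification: the paper simply notes that Proposition~\ref{ProductControlledMatrix} "can be verified by plugging in the standard basis states," which is exactly your two-case check using the copying of $\ket{0}$ and $\ket{1}$ by the green spider together with the defining equations of each $\widetilde{M_i}$.
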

\begin{proposition}[Controlled sum of matrices]\label{SumControlledMatrix}
    Given controlled matrices $\widetilde{M_1}, \ldots, \widetilde{M_k}$ corresponding to matrices $M_1, \ldots, M_k$ and complex numbers $c_1, \ldots, c_k$, the controlled matrix for $\sum_i c_i M_i$ is given by
    \begin{equation}
        \scalebox{0.9}{%
	\beginpgfgraphicnamed{controlled/ControlledMatrixSum}
	\InputIfFileExists{controlled/ControlledMatrixSum.tikz}{}{\input{./figures/controlled/ControlledMatrixSum.tikz}}%
	\endpgfgraphicnamed
}
    \end{equation}
\end{proposition}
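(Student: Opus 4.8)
The plan is to exploit the ``broadcast'' behaviour of the $W$-spider on the input $\ket{1}$. Recall from the Remark that the $k$-legged $W$-spider sends $\ket{0}$ to $\ket{0}^{\otimes k}$ and $\ket{1}$ to $\sum_{i=1}^{k}\ket{0\cdots 0\,1\,0\cdots 0}$, i.e.\ the sum of the $k$ computational basis vectors of Hamming weight one. The diagram on the right-hand side of the proposition is built by fanning the single control wire out through a $k$-legged $W$-spider (well defined by (Aso)), placing a one-input/one-output green box labelled $c_i$ on the $i$-th output leg, feeding that leg into $\widetilde{M_i}$ as its control, and composing $\widetilde{M_1},\dots,\widetilde{M_k}$ in sequence along their $2^m\times 2^m$ matrix legs. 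First I would note that this is well-typed exactly because all the $M_i$, hence all the $\widetilde{M_i}$, act on the same number $m$ of qubits, so the composition makes sense; and that the one-legged green box with label $c_i$ denotes $\ket{0}\bra{0}+c_i\ket{1}\bra{1}$, so it fixes $\ket{0}$ and rescales $\ket{1}$ by $c_i$.

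Next I would check the two defining equations of a controlled matrix for this diagram. For the $\ket{0}$ branch: the $W$-spider outputs $\ket{0}^{\otimes k}$, each green box fixes $\ket{0}$, so every $\widetilde{M_i}$ receives control $\ket{0}$ and, by its defining equation, acts as the identity on the $m$ matrix wires; composing identities gives the identity, as required. For the $\ket{1}$ branch: the $W$-spider outputs $\sum_{i=1}^k \ket{0\cdots 0\,1\,0\cdots 0}$ with the lone $1$ in position $i$, and by linearity of the interpretation it suffices to evaluate the diagram on each weight-one term separately. On the $i$-th term the green box on leg $i$ contributes the scalar $c_i$ (it sends $\ket{1}\mapsto c_i\ket{1}$) while every other green box fixes $\ket{0}$; hence $\widetilde{M_i}$ receives control $\ket{1}$ and acts as $M_i$, each other $\widetilde{M_j}$ receives control $\ket{0}$ and acts as the identity, and the composite of this branch is $c_i M_i$. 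Summing over $i$ yields $\sum_i c_i M_i$, which is the second defining equation; since algebraic ZX (hence ZXW) calculus is complete, this semantic verification establishes the diagrammatic identity.

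The only genuinely delicate point is the $\ket{1}$ branch: one must be sure that the $W$-spider really decomposes the control into the weight-one basis vectors and that pushing such a vector through the stack of $\widetilde{M_i}$'s activates exactly one summand. This is precisely what \eqref{mlegsblackspider0and1eq} (equivalently the $W$-spider interpretation recorded in the Remark) together with linearity does for us; everything else is bookkeeping about composing with identities. If a fully rewrite-based rather than semantic argument is preferred, the same proof structure goes through: plugging $\ket{0}$ and $\ket{1}$ into the $W$-spider is discharged using the $W$-spider fusion rules and (Wdc), the absorption of the scalar green boxes uses (S1)/(S2), and then one invokes the defining rewrites of each $\widetilde{M_i}$ under the respective control values; but the semantic check above is the shortest complete route.
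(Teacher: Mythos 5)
Your proposal is correct and follows exactly the route the paper takes: the paper dispatches both Proposition~\ref{ProductControlledMatrix} and Proposition~\ref{SumControlledMatrix} with the single remark that they ``can be verified by simply plugging in the standard basis states,'' and your argument is precisely that verification carried out in full, using the $W$-spider's action on $\ket{0}$ and $\ket{1}$ from \eqref{mlegsblackspider0and1eq}, the green box $\ket{0}\bra{0}+c_i\ket{1}\bra{1}$, and the defining equations of each $\widetilde{M_i}$. No gaps; your write-up is simply more explicit than the paper's.
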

Both of these can be verified by simply plugging in the standard basis states.
Similarly, we can construct the controlled diagram for sums of states.

% \begin{proposition}[Controlled product of matrices]\label{ProductControlledState}
%     Given controlled states $\widetilde{V_0}, \ldots, \widetilde{V_k}$ corresponding to states $V_0, \ldots, V_k$, the controlled state for element-wise product of $V_0, \ldots, V_k$ is given by
%     \begin{equation}
%         \tikzfig{controlled/ControlledStateProduct}
%     \end{equation}
% \end{proposition}
\begin{proposition}[Controlled sum of states]\label{SumControlledState}
    Given controlled states $\widetilde{V_1}, \ldots, \widetilde{V_k}$ corresponding to states $V_1, \ldots, V_k$ and complex numbers $c_1, \ldots, c_k$, the controlled state for $\sum_i c_i V_i$ is given by
    \begin{equation}
        \scalebox{0.9}{%
	\beginpgfgraphicnamed{controlled/ControlledStateSum}
	\InputIfFileExists{controlled/ControlledStateSum.tikz}{}{\input{./figures/controlled/ControlledStateSum.tikz}}%
	\endpgfgraphicnamed
}
    \end{equation}
\end{proposition}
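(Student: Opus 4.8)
The plan is to verify directly that the proposed diagram, call it $D$, is a controlled state for $\sum_i c_i V_i$, i.e. that feeding $\ket{0}$ into the control leg returns $\ket{0}^{\otimes n}$ and feeding $\ket{1}$ returns $\sum_i c_i V_i$; since a controlled state is a linear map whose control leg is a single qubit, these two equations are exactly what must be checked, just as for the matrix cases. The construction in the proposition is the state analogue of Proposition~\ref{SumControlledMatrix}: a $1\to k$ $W$-spider fans the control leg into $k$ wires, a green box carrying the scalar $c_i$ (the map $\mathrm{diag}(1,c_i)$) sits on the $i$-th wire, this drives the controlled state $\widetilde{V_i}$, and finally the $i$-th output legs of the $k$ bundles are merged position by position by $n$ copies of the $k\to 1$ $W$-spider, leaving $n$ output legs.

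First I would compute $D$ on $\ket{0}$. By the interpretation of the $W$-spider in the Remark, the $1\to k$ $W$-spider sends $\ket{0}$ to $\ket{0}^{\otimes k}$; each green box fixes $\ket{0}$; and by its defining equation each $\widetilde{V_i}$ sends $\ket{0}$ to $\ket{0}^{\otimes n}$. Hence before the merging step the diagram carries $\ket{0}^{\otimes nk}$, and each of the $n$ merging $W$-spiders receives $\ket{0}^{\otimes k}$ and returns $\ket{0}$, so $D\ket{0}=\ket{0}^{\otimes n}$. Next I would compute $D$ on $\ket{1}$. The $1\to k$ $W$-spider sends $\ket{1}$ to $\sum_{j=1}^{k}\ket{0\cdots0\,1\,0\cdots0}$ with the $1$ in slot $j$, so by linearity it suffices to analyse each summand $j$: there the $j$-th green box turns $\ket{1}$ into $c_j\ket{1}$, so $\widetilde{V_j}$ outputs $c_j V_j$, while every other wire carries $\ket{0}$ and so $\widetilde{V_i}$ ($i\neq j$) outputs $\ket{0}^{\otimes n}$, giving the bundle $c_j\,\ket{0}^{\otimes n(j-1)}\otimes V_j\otimes\ket{0}^{\otimes n(k-j)}$. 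To finish I need the small fact that a $k\to 1$ $W$-spider applied to a configuration whose $j$-th input is an arbitrary single-qubit state $\alpha\ket{0}+\beta\ket{1}$ and whose other inputs are $\ket{0}$ returns $\alpha\ket{0}+\beta\ket{1}$; this follows from the form $\ket{0}\bra{0\cdots0}+\sum_{\ell}\ket{1}\bra{0\cdots0\,1\,0\cdots0}$ of the $W$-spider together with linearity. Expanding $V_j=\sum_{x\in\{0,1\}^n}\alpha^{(j)}_x\ket{x_1\cdots x_n}$ and applying this fact at each of the $n$ positions, summand $j$ merges to $c_j V_j$, and summing over $j$ gives $D\ket{1}=\sum_j c_j V_j$.

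The only genuinely delicate point is this last merging step: because $V_j$ may be entangled across its $n$ qubits, one cannot treat the bundle as a tensor product of single-qubit states, so the cleanest argument is the basis expansion above, tracking each term and noting that at position $p$ the merging $W$-spider sees $\ket{x_p}$ in slot $j$ and $\ket{0}$ in every other slot, hence outputs $\ket{x_p}$. One also has to observe that inserting each scalar as a green box $\mathrm{diag}(1,c_i)$, rather than as an overall prefactor on $\widetilde{V_i}$, is precisely what leaves the $\ket{0}$ branch undisturbed, which is why the computation on $\ket{0}$ goes through unchanged. Everything else is the same bookkeeping with the $W$-spider copy rules as in the proof of Proposition~\ref{SumControlledMatrix}.
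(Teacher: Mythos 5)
Your proposal is correct and matches the paper's intended argument: the paper dispatches \autoref{SumControlledState} with the same remark it uses for \autoref{SumControlledMatrix}, namely that the claim ``can be verified by simply plugging in the standard basis states'' on the control wire. Your write-up simply carries out that verification in full, and your care with the merging step (expanding a possibly entangled $V_j$ in the computational basis so that each of the $n$ output-merging $W$-spiders sees $\ket{0}$ on all but the $j$-th slot) is a worthwhile detail the paper leaves implicit.
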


\section{Realising controlled diagrams}\label{controlledrealise}
In this section, we show how to directly construct the controlled matrix $\tilde{M}$ given a square matrix $M$, and the controlled state $\tilde{\psi}$ given a state $\psi$.

\subsection{Controlled matrix}
Consider a matrix $M$ of size $2^m \times 2^m$. As given in \cite{wang2021representing}, $M$ can be represented in diagrams as follows:
 \[  %
	\beginpgfgraphicnamed{candgate/squarematdecom}
	\InputIfFileExists{candgate/squarematdecom.tikz}{}{\input{./figures/candgate/squarematdecom.tikz}}%
	\endpgfgraphicnamed
 \]
 where $E_i, 1\leq i\leq k$ is an elementary matrix. As a consequence of \autoref{ProductControlledMatrix}, if we know how to construct any controlled elementary matrix, then we are able to depict the controlled matrix $M$. 
 It has been shown in \cite{wang2021representing} that the elementary matrix $E_i$ must be of one of the following forms:

\begin{tabular*}{\textwidth}{c @{\extracolsep{\fill}} cc}
    \scalebox{0.8}{%
	\beginpgfgraphicnamed{candgate/rowitimesa}
	\InputIfFileExists{candgate/rowitimesa.tikz}{}{\input{./figures/candgate/rowitimesa.tikz}}%
	\endpgfgraphicnamed
}
    &\scalebox{0.8}{%
	\beginpgfgraphicnamed{candgate/rowitimeaplusj}
	\InputIfFileExists{candgate/rowitimeaplusj.tikz}{}{\input{./figures/candgate/rowitimeaplusj.tikz}}%
	\endpgfgraphicnamed
}
    &\scalebox{0.8}{%
	\beginpgfgraphicnamed{candgate/rowitiswapj}
	\InputIfFileExists{candgate/rowitiswapj.tikz}{}{\input{./figures/candgate/rowitiswapj.tikz}}%
	\endpgfgraphicnamed
}\\[1.5em]
    row multiplication & row addition & row switching
\end{tabular*}

Then it can be verified by plugging standard basis that their corresponding controlled matrices can be obtained by simply adding a branch to the And-gate:
\begin{proposition}
    The controlled elementary matrices are given as:

    \begin{tabular*}{\textwidth}{c @{\extracolsep{\fill}} cc}
        \scalebox{0.8}{%
	\beginpgfgraphicnamed{candgate/rowitimesact}
	\InputIfFileExists{candgate/rowitimesact.tikz}{}{\input{./figures/candgate/rowitimesact.tikz}}%
	\endpgfgraphicnamed
}
        &\scalebox{0.8}{%
	\beginpgfgraphicnamed{candgate/rowitimeaplusjct}
	\InputIfFileExists{candgate/rowitimeaplusjct.tikz}{}{\input{./figures/candgate/rowitimeaplusjct.tikz}}%
	\endpgfgraphicnamed
}
        &\scalebox{0.8}{%
	\beginpgfgraphicnamed{candgate/rowitiswapjct}
	\InputIfFileExists{candgate/rowitiswapjct.tikz}{}{\input{./figures/candgate/rowitiswapjct.tikz}}%
	\endpgfgraphicnamed
}\\[1.5em]
        row multiplication & row addition & row switching
    \end{tabular*}
\end{proposition}

\subsection{Controlled state}
According to Wang~\cite{qwangnormalformbit, wangQufiniteZXcalculusUnified2022}, a state vector $\left(a_0, \ldots, a_{2^m-1} \right)^T$ of dimension $2^m$, $m \in \mathbb{N}$, can be represented in the following normal form:
\begin{equation}
	\beginpgfgraphicnamed{controlled/NormalForm2}
	\InputIfFileExists{controlled/NormalForm2.tikz}{}{\input{./figures/controlled/NormalForm2.tikz}}%
	\endpgfgraphicnamed

\end{equation}
Hence, we can realise any controlled state by constructing the controlled diagram of the above normal form. This is given in the following proposition.
\begin{proposition}
    Controlled state for a vector $\left(a_0, \ldots, a_{2^m-1} \right)^T$ is given by
    \begin{equation}
        \scalebox{0.9}{%
	\beginpgfgraphicnamed{controlled/ControlledNormalForm2}
	\InputIfFileExists{controlled/ControlledNormalForm2.tikz}{}{\input{./figures/controlled/ControlledNormalForm2.tikz}}%
	\endpgfgraphicnamed
}
    \end{equation}
\end{proposition}

Using the above result and \autoref{SumControlledState}, we can derive the expression for linear combination of states, represented in the normal form.
\begin{proposition}\label{2normalformssumpr}
   \[
    \scalebox{0.9}{%
	\beginpgfgraphicnamed{zxw/2normalformssumcr2}
	\InputIfFileExists{zxw/2normalformssumcr2.tikz}{}{\input{./figures/zxw/2normalformssumcr2.tikz}}%
	\endpgfgraphicnamed
}
    \]
\end{proposition}

\iffalse
\begin{theorem}\label{2normalformssumpr}
    \[
    \scalebox{0.9}{%
	\beginpgfgraphicnamed{zxw/2normalformssum}
	\InputIfFileExists{zxw/2normalformssum.tikz}{}{\input{./figures/zxw/2normalformssum.tikz}}%
	\endpgfgraphicnamed
}
    \]
\end{theorem}
The proof of \autoref{2normalformssumpr} is given in the appendix.
As a corollary, this theorem can be generalised to sums of more than 2 terms. 
\begin{corollary}\label{2normalformssumprcroy}
    \[
    \scalebox{0.9}{%
	\beginpgfgraphicnamed{zxw/2normalformssumcr}
	\InputIfFileExists{zxw/2normalformssumcr.tikz}{}{\input{./figures/zxw/2normalformssumcr.tikz}}%
	\endpgfgraphicnamed
}
    \]
\end{corollary}
\fi

\section{Schr\"odinger Equation}\label{Schrodingers}
As an application of the results from the previous sections, we prove the linearity of the solutions of the Schr\"odinger equation.
We recall the Schr\"odinger equation:
\begin{equation}\label{Schrodinger}
i \frac{\partial}{\partial t} \ket{\Psi(t)} = H\ket{\Psi(t)}
\end{equation}
where $t$ is time, $\ket{\Psi(t)}$ is the state vector of the quantum system in question, and $H$ is a Hamiltonian operator.
We can write the Schr\"odinger equation diagrammatically as:
\begin{equation}
    \scalebox{0.8}{%
	\beginpgfgraphicnamed{zxw/SchrodingerEquation}
	\InputIfFileExists{zxw/SchrodingerEquation.tikz}{}{\input{./figures/zxw/SchrodingerEquation.tikz}}%
	\endpgfgraphicnamed
}
\end{equation}
where $\ket{\Psi(t)}$ can be expressed using the normal form and $\ket{\Psi'(t)}$ can be obtained by applying the differentiation gadget~\cite{wang2022differentiating} to $\ket{\Psi(t)}$.
Along with this, we use \autoref{2normalformssumpr} to show that any linear combination of solutions of Schr\"odinger equation is also a solution.
\begin{proposition}\label{thm:schro-linear}
Assume that $\Psi(t)$ and $\Phi(t)$ satisfy the Schrödinger equation (\ref{Schrodinger}) and $a, b$ are arbitrary complex numbers, then so does  $a\Psi(t)+b\Phi(t)$, i.e.,
\begin{equation}
    \scalebox{0.8}{%
	\beginpgfgraphicnamed{zxw/sum2soulutionsabstract}
	\InputIfFileExists{zxw/sum2soulutionsabstract.tikz}{}{\input{./figures/zxw/sum2soulutionsabstract.tikz}}%
	\endpgfgraphicnamed
}
\end{equation}
\end{proposition}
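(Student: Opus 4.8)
The statement is the diagrammatic incarnation of the elementary fact that $i\partial_t(a\Psi+b\Phi) = a\,(i\partial_t\Psi) + b\,(i\partial_t\Phi) = aH\Psi + bH\Phi = H(a\Psi+b\Phi)$; all the content lies in carrying out this manipulation on ZXW diagrams while staying in normal form. The plan is to first write $\ket{\Psi(t)}$ and $\ket{\Phi(t)}$ in the normal form of \cite{qwangnormalformbit} with time-dependent parameters, so that $\ket{\Psi'(t)}$ and $\ket{\Phi'(t)}$ are precisely the diagrams obtained by applying the differentiation gadget of \cite{wang2022differentiating} to those parameters. By \autoref{2normalformssumpr} (and, for more than two summands, its iteration through \autoref{SumControlledState}), the combination $a\ket{\Psi(t)}+b\ket{\Phi(t)}$ is again a single normal-form diagram, assembled from the controlled states $\widetilde{\Psi}$ and $\widetilde{\Phi}$ glued along a $W$-spider with the scalars $a$ and $b$.

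Next I would isolate two diagrammatic ``linearity'' facts. First, that the differentiation gadget distributes over this sum: applying it to the $W$-glued normal form of $a\ket{\Psi(t)}+b\ket{\Phi(t)}$ rewrites to the $(a,b)$-combination of $\ket{\Psi'(t)}$ and $\ket{\Phi'(t)}$; this should follow from the additivity of the derivative of ZX diagrams established in \cite{wang2022differentiating, jeandel2022addition}, combined with the explicit shape of the sum diagram given by \autoref{2normalformssumpr}. Second, that post-composition with $H$ distributes over the sum: $H\,(a\ket{\Psi}+b\ket{\Phi})$ equals $a\,(H\ket{\Psi}) + b\,(H\ket{\Phi})$ as diagrams; this is immediate from bilinearity of composition together with \autoref{SumControlledState}, or directly by plugging in standard basis states as done elsewhere in the paper.

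Finally I would chain these observations: by the first fact, $i\partial_t(a\ket{\Psi(t)}+b\ket{\Phi(t)})$ rewrites to the $(a,b)$-combination of $i\ket{\Psi'(t)}$ and $i\ket{\Phi'(t)}$; the hypotheses that $\Psi$ and $\Phi$ solve (\ref{Schrodinger}) then let me replace $i\ket{\Psi'(t)}$ by $H\ket{\Psi(t)}$ and $i\ket{\Phi'(t)}$ by $H\ket{\Phi(t)}$; and the second fact re-assembles the result into $H\,(a\ket{\Psi(t)}+b\ket{\Phi(t)})$, which is the right-hand side of the claimed identity. I expect the main obstacle to be the first fact: one must verify that the differentiation gadget, applied to the $W$-glued normal form coming from \autoref{2normalformssumpr}, genuinely factors through the two branches --- tracking its interaction with the $W$-spider and the And-box control structure via (Wdc), (Brk) and the And-box copy rules --- so that its output is exactly the combined normal form of the derivatives rather than a diagram that merely equals it after further rewriting. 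Once that interaction is pinned down, the remaining steps are routine rewrites.
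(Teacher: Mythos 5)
Your proposal is correct and follows essentially the same route as the paper: combine $a\ket{\Psi(t)}+b\ket{\Phi(t)}$ into a single normal form via \autoref{2normalformssumpr}, differentiate, invoke the hypotheses to replace $i\ket{\Psi'}$ and $i\ket{\Phi'}$ by $H\ket{\Psi}$ and $H\ket{\Phi}$, and reassemble with \autoref{2normalformssumpr} again. The one step you flag as the main obstacle --- commuting the differentiation gadget past the $W$/And control structure --- is handled in the paper more directly by \autoref{psinormalformdifflm}, which gives the derivative of a normal form entrywise, so ordinary linearity of $\tfrac{d}{dt}$ on the scalar entries $a\psi_i(t)+b\phi_i(t)$ does the work without any further diagram surgery.
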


\section{Representing Hamiltonians in ZXW}\label{sec:Hamiltonians}
Here we give an efficient representation for a wide class of matrices using the ZXW calculus.
\begin{lemma}\label{lem:summatrix}
Any matrix of the form $\sum^n_{i=1} \alpha_i c^{-1}_i\left(\bigotimes^m_{j=1} D(a_{ij})\right)c_i$,
where $D(a_{ij}) = \ket{0}\bra{0} + a_{ij}\ket{1}\bra{1}$,  $c_i$ is the conjugation, and $\alpha_i$ is a complex coefficient,
can be expressed in ZXW calculus as
\begin{center}
\scalebox{0.9}{%
	\beginpgfgraphicnamed{sum/lem0v2}
	\InputIfFileExists{sum/lem0v2.tikz}{}{\input{./figures/sum/lem0v2.tikz}}%
	\endpgfgraphicnamed
}
\end{center}
\end{lemma}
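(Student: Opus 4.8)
The plan is to assemble the diagram compositionally, reducing everything to the controlled-matrix machinery of \autoref{controlledingen}. First note that $D(a)=\ket0\bra0+a\ket1\bra1$ is literally the one-input one-output green box spider with parameter $a$, so $\bigotimes_{j=1}^m D(a_{ij})$ is just $m$ such spiders side by side, and $M_i := c_i^{-1}\bigl(\bigotimes_j D(a_{ij})\bigr)c_i$ is obtained by pre- and post-composing with $c_i$ and $c_i^{-1}$ (which in our applications are tensor products of $1,H,V,V^\dagger$, and in general are matrices, hence themselves ZXW diagrams by completeness of the calculus). The scalar $\alpha_i$ is a ZXW scalar. So the content of the lemma is entirely in realising the \emph{sum} $\sum_i\alpha_i M_i$.

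For that, I would first build a controlled matrix $\widetilde{M_i}$ for each $M_i$. The observation that makes this cheap is that the conjugation need not be controlled: since $c_i^{-1}\,1\,c_i=1$, it is enough to control the diagonal middle part, i.e. $\widetilde{M_i}=(1\otimes c_i^{-1})\,\widetilde{N_i}\,(1\otimes c_i)$ with $N_i=\bigotimes_j D(a_{ij})$, where $\widetilde{N_i}=\ket0\bra0\otimes 1+\ket1\bra1\otimes N_i$ is the controlled diagonal map. This last diagram is immediate: copy the control wire with a green $1\to m$ spider and pair the $j$-th copy with the $j$-th target through a controlled-$D(a_{ij})$ box, i.e. the diagonal two-qubit gate with diagonal $(1,1,1,a_{ij})$, which is the ``and''-box with a green parameter spider attached, exactly the gadget already used for the controlled elementary matrices. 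One checks on the standard basis that $\ket0\ket x\mapsto\ket0\ket x$ and $\ket1\ket x\mapsto\bigl(\prod_j a_{ij}^{x_j}\bigr)\ket1\ket x$, as required.

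With the $\widetilde{M_i}$ in hand, \autoref{SumControlledMatrix} produces the controlled matrix $\widetilde{\sum_i\alpha_i M_i}$, and plugging $\ket1$ into its control wire returns $\sum_i\alpha_i M_i$ by the defining property of a controlled matrix. The final step is to rewrite the diagram obtained this way into the compact form displayed in the statement: fuse the copy-spiders with (S1) and the bialgebra/copy rules (B1)--(B3), collapse the nested W-structure coming out of \autoref{SumControlledMatrix} using (Wdc) and (Aso), and absorb scalars via (EU). Alternatively --- and this may read more cleanly --- I would skip the controlled detour and verify the displayed diagram directly by feeding in standard basis states $\ket x$ and copointing with $\bra y$: each branch $i$ then contributes exactly $\alpha_i\,\bra y c_i^{-1}\bigl(\bigotimes_j D(a_{ij})\bigr)c_i\ket x$, and the W-spiders at the bottom and top implement the sum of these $n$ contributions (this is precisely the mechanism behind \autoref{SumControlledMatrix}: a W-fan sums maps that act as the identity on the relevant vacuum line).

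The main obstacle I expect is this last, bookkeeping step rather than anything conceptual: getting the W-spiders, the copy-spiders and the coefficient boxes into \emph{exactly} the displayed arrangement needs a careful diagram chase, one has to make sure that plugging control $=\ket1$ does not trivialise any branch (it does not, because every $\widetilde{M_i}$ was built from genuine controlled-$D$ boxes rather than from a bare wire), and one must track the scalar factors introduced by (EU), (Wdc) and the integer-normalisation conventions of the green/pink spiders.
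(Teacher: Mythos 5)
Your proposal is correct and follows essentially the same route as the paper: the paper's proof likewise checks (by plugging basis states into the control wire) that the gadget $\tilde{M_i}$ built from the and-boxes with green parameter spiders is the controlled matrix of $\bigotimes_j D(a_{ij})$ --- with the conjugations $c_i, c_i^{-1}$ passing outside the control exactly as you observe --- and then invokes \autoref{SumControlledMatrix} to assemble the sum. Your additional direct basis-state verification of the final diagram is a fine alternative reading of the same mechanism.
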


\begin{theorem}\label{lem:paulisum}
Any Hamiltonian $\sum^n_{i=1} \alpha_i \bigotimes^m_{j=1} P_{ij}$ can be expressed in ZXW calculus using controlled-Paulis.
\begin{center}\scalebox{0.9}{%
	\beginpgfgraphicnamed{sum/lem1v2}
	\InputIfFileExists{sum/lem1v2.tikz}{}{\input{./figures/sum/lem1v2.tikz}}%
	\endpgfgraphicnamed
}\end{center}
For each controlled-Pauli, there is a leg on the $j$-th qubit if $P_{ij} \neq I$, and $c_i$ is the Clifford conjugation corresponding to the Pauli operator $P_{ij} = c^\dagger_{ij} Z c_{ij}$.
\end{theorem}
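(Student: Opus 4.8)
The plan is to obtain the statement as a direct specialisation of \autoref{lem:summatrix}. The starting observation is that $Z = D(-1)$ and $I = D(1)$, and that every single-qubit Pauli with $P \neq I$ can be written as $P = c^\dagger Z c$ for a single-qubit Clifford $c$ — for instance $X = H Z H$ and $Y = (HS^\dagger)^\dagger Z (HS^\dagger)$, while $Z = I^\dagger Z I$. Hence, for a fixed Pauli string $\bigotimes_{j=1}^m P_{ij}$, I set $a_{ij} = -1$ on the qubits with $P_{ij} \neq I$ and $a_{ij} = 1$ on those with $P_{ij} = I$, take $c_{ij}$ to be the single-qubit Clifford above (with $c_{ij} = I$ when $P_{ij} = I$), and put $c_i = \bigotimes_{j=1}^m c_{ij}$. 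Since the $c_{ij}$ are single-qubit, $c_i^{-1}\big(\bigotimes_j D(a_{ij})\big) c_i = \bigotimes_j c_{ij}^\dagger D(a_{ij}) c_{ij} = \bigotimes_j P_{ij}$, so $\sum_i \alpha_i \bigotimes_j P_{ij} = \sum_i \alpha_i c_i^{-1}\big(\bigotimes_j D(a_{ij})\big) c_i$ is exactly of the shape handled by \autoref{lem:summatrix}.

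Feeding this into \autoref{lem:summatrix} produces a diagram whose only matrix-dependent gadgets are the controlled-$D(a_{ij})$ boxes together with the conjugating boxes $c_i, c_i^{-1}$, so it remains only to read these off as controlled-Paulis. Because $c_i = \bigotimes_j c_{ij}$ is a tensor product, the conjugating boxes factor through the individual target wires, and since controlled-$D$ gadgets sharing one control but acting on disjoint targets commute, each $c_{ij}$ can be slid in to sandwich exactly the controlled-$D(a_{ij})$ on wire $j$. On a qubit with $P_{ij} \neq I$ this yields: apply $c_{ij}$ unconditionally, then the controlled-$D(-1)$ (i.e. the controlled-$Z$), then $c_{ij}^\dagger$ unconditionally; on control $\ket{0}$ this composes to $c_{ij}^\dagger I c_{ij} = I$ and on control $\ket{1}$ to $c_{ij}^\dagger Z c_{ij} = P_{ij}$, which is exactly the defining property of the controlled matrix $\widetilde{P_{ij}}$. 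On a qubit with $P_{ij} = I$ we have $D(1) = I$, the controlled identity is the bare wire, and the And-gate branch carries no leg — matching the ``leg on the $j$-th qubit if $P_{ij} \neq I$'' convention in the statement.

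The main point to get right is bookkeeping rather than a genuine obstacle: one must check that the And-gate / W-spider scaffolding supplied by \autoref{lem:summatrix} is untouched when the Cliffords are pushed onto individual target wires — which is immediate because the $c_{ij}$ are \emph{unconditional} (they never touch the control wire), so they slide past the And-gate branch freely — and that the per-string identity $\bigotimes_j P_{ij} = c_i^{-1}\big(\bigotimes_j D(a_{ij})\big) c_i$ really holds qubit-by-qubit, which follows from the single-qubit facts $X = HZH$, $Y = (HS^\dagger)^\dagger Z (HS^\dagger)$, $Z = D(-1)$, $I = D(1)$. In short, no new rewrite rule is needed: the theorem is a corollary of \autoref{lem:summatrix}, with \autoref{ProductControlledMatrix} implicitly underlying the conjugation step.
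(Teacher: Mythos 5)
Your proposal is correct and follows essentially the same route as the paper: both specialise \autoref{lem:summatrix} by setting $a_{ij}=e^{i0}=1$ when $P_{ij}=I$ (so the leg disconnects) and $a_{ij}=e^{i\pi}=-1$ otherwise (giving a controlled-$Z$ leg), with the Clifford conjugations $c_i$ already provided by the lemma converting $Z$ into the desired Paulis. Your additional bookkeeping about sliding the single-qubit $c_{ij}$ onto individual target wires is a harmless elaboration of what the paper leaves implicit.
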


\begin{example}\label{hamiexample1}
For the Hamiltonian $H = X_1X_2 + X_2X_3 - Z_1 - Z_2 - Z_3$, we have
\begin{equation*}
    \scalebox{0.9}{%
	\beginpgfgraphicnamed{sum/ex1}
	\InputIfFileExists{sum/ex1.tikz}{}{\input{./figures/sum/ex1.tikz}}%
	\endpgfgraphicnamed
}
\end{equation*}
\end{example}

For an even larger example, the Hamiltonian used in \cite{Gabriel2022} is shown in Figure~\ref{carbonhamitonian}.

% Each sub-diagram corresponding to a Pauli term resembles the Pauli gadget;
% this will be addressed in the next section. Despite the visual resemblance,
% we will show diagrammatically that in general $P_i + P_j = P_j + P_i$ but
% $\exp(i\theta (P_i + P_j)) \neq \exp(i\theta (P_j + P_i))$, and that the first equation
% holds because the ``Pauli gadgets'' are connected by the W spider.
% Note that this is a well-known result outside of ZX calculus, and the goal
% here is to redevelop the foundations of linear algebra using diagrammatic
% techniques.

\begin{proposition}\label{lem:paulicom}
The diagrammatic representation of controlled sum of Hamiltonians, thus the sum of Hamiltonians, in \autoref{lem:paulisum} respects commutativity of addition  (i.e. $\widetilde{P_i+P_j} = \widetilde{P_j+P_i}$):% (i.e. $\widetilde{P_i} + \widetilde{P_j} = \widetilde{P_j} + \widetilde{P_i}$): 
$$
 \scalebox{0.9}{%
	\beginpgfgraphicnamed{sum/lem2}
	\InputIfFileExists{sum/lem2.tikz}{}{\input{./figures/sum/lem2.tikz}}%
	\endpgfgraphicnamed
}   %\scalebox{0.9}{\tikzfig{sum/lem2_1}\hspace{2cm} \tikzfig{sum/lem2}}
$$
\end{proposition}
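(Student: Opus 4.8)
The plan is to expand both sides of $\widetilde{P_i+P_j}=\widetilde{P_j+P_i}$ using the controlled-sum construction of \autoref{SumControlledMatrix} as instantiated in \autoref{lem:paulisum}, and then give a short rewrite from one side to the other. In that construction the two controlled-Paulis, each carrying its scalar coefficient, are plugged into two legs of a $W$-spider that distributes the control wire and, on the output side, into two legs of a second $W$-spider that recombines the results; the And-boxes sitting inside the individual controlled-Paulis are untouched by the swap. The only difference between the two sides is which leg of these $W$-spiders carries which summand.

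The key fact is that the $m$-legged $W$-spider from the Remark after the interpretation is completely symmetric in its legs: its semantics $\ket{0\cdots0}\bra{0}+\sum_k\ket{0\cdots1\cdots0}\bra{1}$ is invariant under permuting the output wires, and this is exactly what rule (Sym) provides (with associativity (Aso) making the $m$-legged spider well defined in the first place). So I would apply (Sym) to the control-distributing $W$-spider and, symmetrically, to the output-recombining $W$-spider; this interchanges the two legs on which $\widetilde{P_i}$ and $\widetilde{P_j}$ hang and introduces a pair of wire crossings. Using only the topological moves of string diagrams, i.e.\ sliding the boxes $\widetilde{P_i}$ and $\widetilde{P_j}$ (together with their coefficients) through the crossings, I would return them to their original left-to-right positions, so that the two crossings cancel and what remains is precisely the diagram for $\widetilde{P_j+P_i}$.

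I expect the main obstacle to be bookkeeping rather than anything deep: one has to check that the coefficient boxes, the fanned-out control wires, and the two $W$-spiders are permuted consistently, so that after applying (Sym) at both ends the crossings introduced at the input and at the output genuinely annihilate and no stray swap is left behind. As a fallback, the statement also follows immediately from \autoref{SumControlledMatrix}, whose correctness is checked by plugging in standard basis states, together with the trivial identity $c_iM_i+c_jM_j=c_jM_j+c_iM_i$: the two diagrams then denote the same linear map and hence are equal by completeness of the ZXW calculus. The argument via (Sym) is, however, the one in the diagrammatic spirit of the paper, and it applies verbatim to the controlled-sum-of-states version (\autoref{SumControlledState}) as well.
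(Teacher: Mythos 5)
Your fallback argument (semantic equality of the two interpretations plus completeness of the ZXW calculus) is sound, but your main diagrammatic argument has a gap that sits exactly where the real work of this proposition is. In the controlled-sum construction of \autoref{SumControlledMatrix}, only the control wire is fanned out by a $W$-spider; the controlled matrices $\widetilde{P_i}$ and $\widetilde{P_j}$ themselves are composed \emph{sequentially} on the $m$ qubit wires (identity is the unit of composition, which is what makes the construction work when a control leg carries $\ket{0}$). There is no second $W$-spider ``recombining the results'' on the output side --- that picture is correct for the controlled-\emph{state} sum of \autoref{SumControlledState}, where the outputs are merged by $W$-multiplications, but not for controlled matrices. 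Consequently, after you apply (Sym) to the control $W$-spider, you still have to interchange two sequentially composed controlled-Pauli gadgets on the qubit register, and that is not a topological move: as raw diagrams the two gadgets need not commute (on control input $\ket{1}\otimes\ket{1}$ one order computes $P_jP_i$ and the other $P_iP_j$), so no amount of sliding boxes through crossings will swap their order.

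That interchange is precisely what the paper's proof spends its effort on. It commutes the two gadgets using the Pauli-gadget commutation relation (\autoref{paulianticommuteeq}): when the Pauli strings anti-commute, pushing one gadget through the other introduces a Hadamard edge between their hubs, and the second step of the paper's derivation shows that this extra edge is absorbed by the $W$-spider on the controls --- intuitively because the $W$-spider never feeds $\ket{1}$ to both gadgets simultaneously, so the input on which the two orderings disagree never contributes. To repair your proof you would need to supply this commutation step explicitly (or restrict your argument to the controlled-state setting, where the parallel-composition picture and your (Sym)-plus-sliding argument do go through).
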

%\TODO{Fix this. The first equation is oddly placed here. Maybe split this proposition.}
%The first equation in \autoref{lem:paulicom} shows that a Hadamard edge is
%added to the `body' of the Pauli gadgets when their Hamiltonians anti-commute,
%the proof can be found in \cite[Theorem 3]{yeung2020diagrammatic}.
While this proposition is obvious in non-diagrammatic calculations, our goal here is to diagrammatically characterise the commutative properties of controlled matrices.

\begin{landscape}
    \begin{figure}
        1. Hamiltonian truncated using
        \texttt{hamiltonian.compress(abs\_tol=0.019)}, 42 terms:
        
        \includegraphics[width=\linewidth]{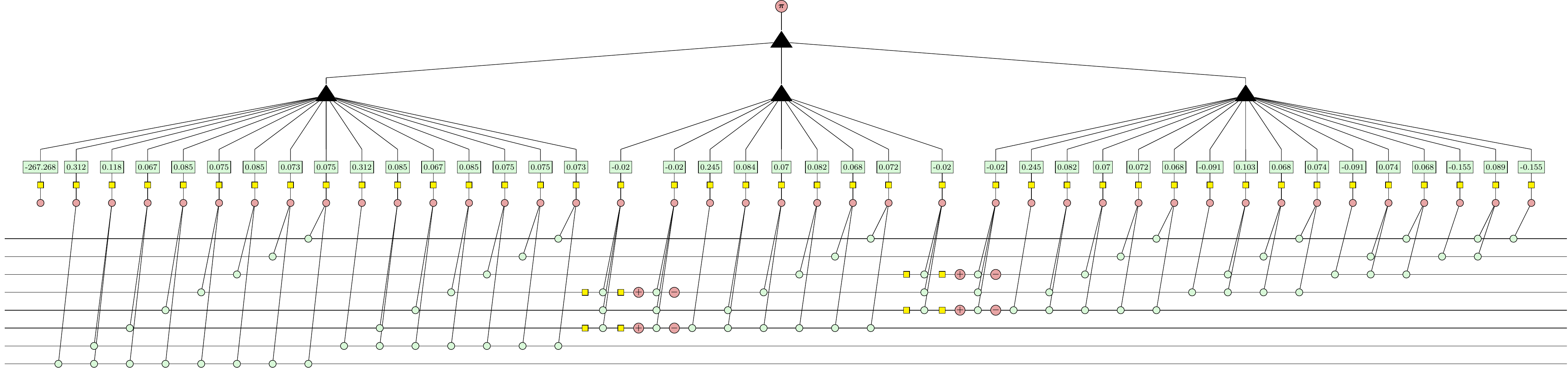}
        
        \vspace{4em}
        2. Hamiltonian truncated using
        \texttt{hamiltonian.compress(abs\_tol=0.008)}, 70 terms:
        
        \includegraphics[width=\linewidth]{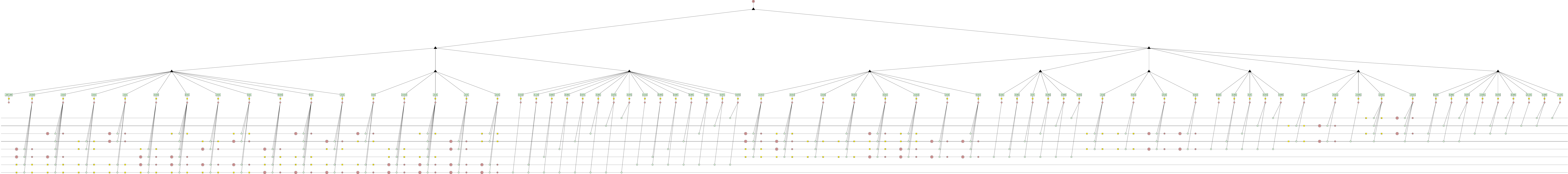}
        \caption{Hamiltonian in Greene-Diniz et al. \cite{Gabriel2022}}
        \label{carbonhamitonian}
    \end{figure}
\end{landscape}

% \subsection{Stone's theorem}\label{stonetm}
% Stone's Theorem on one parameter unitary groups \cite{stone1932one}: Let $\{U(t)\}_{t \in \mathbb{R}}$ be a strongly continuous one-parameter unitary group. Then $\{U(t)\}_{t \in \mathbb{R}}$ is generated by a Hermitian operator $H$ 
% and conversely every Hermitian matrix generates a unitary one parameter group.
% $$U(t)=e^{i t H} \qquad
% H = \left. \frac{\partial U(t)}{\partial t} \right |_{t=0} $$

% In general, obtaining the one parameter unitary group from a Hamiltonian is a
% hard problem. Here we show the correspondence between Pauli matrices and their exponentials, then the correspondence between sums of Pauli matrices and their exponentials.

% \begin{center}
%     \bgroup
%     \def\arraystretch{1.5}%  1 is the default, change whatever you need
%     \begin{tabular}{|c|c|}
%         \hline
%         $Z \leftrightarrow R_z(\theta)$ &
%         $Z \otimes X \otimes Y \leftrightarrow \Phi_{Z, X, Y}(\theta)$ \\ \hline
%         \tikzfig{stone/stone0} & \tikzfig{stone/stone1} \\ \hline
%         \multicolumn{2}{|c|}{
%         $Z \otimes X \otimes Z + 2 * (X \otimes Z \otimes X) \leftrightarrow \Phi_{X, Z, X}(\theta)\Phi_{X, Z, X}(2\theta)$}\\ \hline
%         \multicolumn{2}{|c|}{
%             \tikzfig{stone/stone2}}\\ \hline
%     \end{tabular}
%     \egroup
% \end{center}

% \begin{lemma}
% Pauli gadgets commute when their Hamiltonians commute.
% When their Hamiltonians anti-commute, the bodies of the Pauli gadget  
% become connected.
% \begin{center}[EXAMPLE]\end{center}
% \end{lemma}

\section{Hamiltonian Exponentiation}\label{sec:HamiltonianExponentiation}

According to Stone's theorem \cite{stone1932one}, the time-evolution operator of a Hamiltonian is the one-parameter unitary group (OPUG) generated by the Hamiltonian. Tasks such as Hamiltonian simulation require finding a unitary circuit that approximates the time evolution operator $e^{-iHt/2}$ for a given Hamiltonian~$H$. In general, finding such unitary circuit is a hard problem. Developing a method to obtain the one-parameter unitary group of Hamiltonian diagrams in ZXW will allow us tackle problems from quantum chemistry and condensed matter physics using diagrammatic tools. We could further use the rules of ZXW calculus to rewrite a Hamiltonian exponential diagram to a unitary time-evolution circuit.
In this section, we first talk about the exponentiation of Hamiltonians containing only commuting Pauli terms. We susequently describe the general case where the Hamiltonian may contain non-commuting terms.

\subsection{Hamiltonians with only commuting terms}\label{sec:HamiltonianCommuting}
For Hamiltonians comprised of only commuting Pauli terms, we have a simple correspondence between the diagrams of Hamiltonians and their exponentials. In the following example, we show how to obtain the Hamiltonian from its one-parameter unitary group.
We begin by writing the exponential of Pauli strings of the Hamiltonian using the phase gadgets~\cite{Cowtan_2020}. Then we differentiate the exponential diagram and set $t$ to 0: $e^{-iHt/2} \  \overset{\partial}{\mapsto} \  \frac{-i}{2} H e^{-iHt/2} \  \overset{t=0}{\mapsto} \  \frac{-i}{2} H$. As an example, consider the Hamiltonian $H = ZZZ + 2XZX$. We will omit the global phase in the diagrams for simplicity.

\ctikzfig{stone/stone2_proof}

We see from the above that the Hamiltonian diagrammatically commutes with its exponential.
We set $t=0$ to get the Hamiltonian:

\ctikzfig{stone/stone2}

In the following table, we show the correspondence between a few more Hamiltonians $H$ and their OPUGs $\Phi_H(t)$.
\begin{center}
    \bgroup
    \def\arraystretch{1.5}%  1 is the default, change whatever you need
    \begin{tabular}{|c|c|}
        \hline
        $Z \quad \leftrightarrow \quad \Phi_Z(t)$ &
        $ZXY \quad \leftrightarrow \quad \Phi_{ZXY}(t)$ \\ \hline
        \scalebox{0.9}{%
	\beginpgfgraphicnamed{stone/stone0}
	\begin{tikzpicture}[scale=0.5]
	\begin{pgfonlayer}{nodelayer}
		\node [style=none] (2) at (0, 2) {};
		\node [style=none] (3) at (3, 2) {};
		\node [style={gn_phase}] (6) at (1.5, 2) {$\pi$};
		\node [style=none] (15) at (5, 2) {};
		\node [style=none] (16) at (8, 2) {};
		\node [style={gn_phase}] (19) at (6.5, 2) {$t$};
		\node [style=none] (20) at (4, 2) {$\leftrightarrow$};
	\end{pgfonlayer}
	\begin{pgfonlayer}{edgelayer}
		\draw (2.center) to (3.center);
		\draw (15.center) to (16.center);
	\end{pgfonlayer}
\end{tikzpicture}
}%
	\endpgfgraphicnamed
} & \scalebox{0.9}{%
	\beginpgfgraphicnamed{stone/stone1}
	\begin{tikzpicture}[scale=0.5]
	\begin{pgfonlayer}{nodelayer}
		\node [style=none] (0) at (0, 0) {};
		\node [style=none] (1) at (0, 1) {};
		\node [style=none] (2) at (0, 2) {};
		\node [style=none] (3) at (3, 2) {};
		\node [style=none] (4) at (3, 1) {};
		\node [style=none] (5) at (3, 0) {};
		\node [style={gn_phase}] (6) at (1.5, 2) {$\pi$};
		\node [style={gn_phase}] (7) at (1.5, 1) {$\pi$};
		\node [style={gn_phase}] (8) at (1.5, 0) {$\pi$};
		\node [style=H box] (9) at (0.75, 1) {};
		\node [style=H box] (10) at (2.25, 1) {};
		\node [style=rn] (11) at (0.75, 0) {$\scriptstyle +$};
		\node [style=rn] (12) at (2.25, 0) {$\scriptstyle -$};
		\node [style=none] (13) at (5, 0) {};
		\node [style=none] (14) at (5, 1) {};
		\node [style=none] (15) at (5, 2) {};
		\node [style=none] (16) at (8, 2) {};
		\node [style=none] (17) at (8, 1) {};
		\node [style=none] (18) at (8, 0) {};
		\node [style=gn] (19) at (6.5, 2) {};
		\node [style=gn] (20) at (6.5, 1) {};
		\node [style=gn] (21) at (6.5, 0) {};
		\node [style=H box] (22) at (5.75, 1) {};
		\node [style=H box] (23) at (7.25, 1) {};
		\node [style=rn] (24) at (5.75, 0) {$\scriptstyle +$};
		\node [style=rn] (25) at (7.25, 0) {$\scriptstyle -$};
		\node [style=none] (26) at (4, 1) {$=$};
		\node [style=rn] (27) at (7.5, 3) {};
		\node [style={gn_phase}] (28) at (8.5, 3) {$\pi$};
		\node [style=none] (29) at (10, 0) {};
		\node [style=none] (30) at (10, 1) {};
		\node [style=none] (31) at (10, 2) {};
		\node [style=none] (32) at (13, 2) {};
		\node [style=none] (33) at (13, 1) {};
		\node [style=none] (34) at (13, 0) {};
		\node [style=gn] (35) at (11.5, 2) {};
		\node [style=gn] (36) at (11.5, 1) {};
		\node [style=gn] (37) at (11.5, 0) {};
		\node [style=H box] (38) at (10.75, 1) {};
		\node [style=H box] (39) at (12.25, 1) {};
		\node [style=rn] (40) at (10.75, 0) {$\scriptstyle +$};
		\node [style=rn] (41) at (12.25, 0) {$\scriptstyle -$};
		\node [style=none] (42) at (9, 1) {$\leftrightarrow$};
		\node [style=rn] (43) at (12.5, 3) {};
		\node [style={gn_phase}] (44) at (13.5, 3) {$t$};
		\node [style=none] (45) at (7.5, 3.75) {};
	\end{pgfonlayer}
	\begin{pgfonlayer}{edgelayer}
		\draw (2.center) to (3.center);
		\draw (4.center) to (1.center);
		\draw (0.center) to (5.center);
		\draw (15.center) to (16.center);
		\draw (17.center) to (14.center);
		\draw (13.center) to (18.center);
		\draw (27) to (19);
		\draw (27) to (20);
		\draw (27) to (21);
		\draw (28) to (27);
		\draw (31.center) to (32.center);
		\draw (33.center) to (30.center);
		\draw (29.center) to (34.center);
		\draw (43) to (35);
		\draw (43) to (36);
		\draw (43) to (37);
		\draw (44) to (43);
	\end{pgfonlayer}
\end{tikzpicture}
}%
	\endpgfgraphicnamed
} \\ \hline
        \multicolumn{2}{|c|}{
        $3 (XZY) - (ZZX) \quad \leftrightarrow \quad \Phi_{XZY}(3t) \ \Phi_{ZZX}(-t)$}\\ \hline
        \multicolumn{2}{|c|}{
            \scalebox{0.9}{%
	\beginpgfgraphicnamed{stone/stone3}
	\begin{tikzpicture}[scale=0.5]
	\begin{pgfonlayer}{nodelayer}
		\node [style=H box] (23) at (4.5, 2) {};
		\node [style=none] (13) at (5, 0) {};
		\node [style=none] (14) at (5, 1) {};
		\node [style=none] (15) at (5, 2) {};
		\node [style=none] (16) at (8.25, 2) {};
		\node [style=none] (17) at (8.25, 1) {};
		\node [style=none] (18) at (8.25, 0) {};
		\node [style=gn] (19) at (6.5, 2) {};
		\node [style=gn] (20) at (6.5, 1) {};
		\node [style=gn] (21) at (6.5, 0) {};
		\node [style=H box] (22) at (2.75, 2) {};
		\node [style=rn] (27) at (7.5, 3) {};
		\node [style=none] (42) at (9.25, 1) {$\leftrightarrow$};
		\node [style=none] (45) at (2, 0) {};
		\node [style=none] (46) at (2, 1) {};
		\node [style=none] (47) at (2, 2) {};
		\node [style=none] (48) at (5, 2) {};
		\node [style=none] (49) at (5, 1) {};
		\node [style=none] (50) at (5, 0) {};
		\node [style=gn] (51) at (3.5, 2) {};
		\node [style=gn] (52) at (3.5, 1) {};
		\node [style=gn] (53) at (3.5, 0) {};
		\node [style=rn] (58) at (4.5, 3) {};
		\node [style=H box] (60) at (5.75, 0) {};
		\node [style=H box] (61) at (7.5, 0) {};
		\node [style=H box] (62) at (12.75, 2) {};
		\node [style=none] (63) at (13.25, 0) {};
		\node [style=none] (64) at (13.25, 1) {};
		\node [style=none] (65) at (13.25, 2) {};
		\node [style=none] (66) at (16.5, 2) {};
		\node [style=none] (67) at (16.5, 1) {};
		\node [style=none] (68) at (16.5, 0) {};
		\node [style=gn] (69) at (14.75, 2) {};
		\node [style=gn] (70) at (14.75, 1) {};
		\node [style=gn] (71) at (14.75, 0) {};
		\node [style=H box] (72) at (11, 2) {};
		\node [style=rn] (73) at (15.75, 3) {};
		\node [style={gn_phase}] (74) at (16.75, 3) {$- t$};
		\node [style=none] (75) at (10.25, 0) {};
		\node [style=none] (76) at (10.25, 1) {};
		\node [style=none] (77) at (10.25, 2) {};
		\node [style=none] (78) at (13.25, 2) {};
		\node [style=none] (79) at (13.25, 1) {};
		\node [style=none] (80) at (13.25, 0) {};
		\node [style=gn] (81) at (11.75, 2) {};
		\node [style=gn] (82) at (11.75, 1) {};
		\node [style=gn] (83) at (11.75, 0) {};
		\node [style=rn] (86) at (12.75, 3) {};
		\node [style={gn_phase}] (87) at (13.75, 3) {$3 t$};
		\node [style=H box] (88) at (14, 0) {};
		\node [style=H box] (89) at (15.75, 0) {};
		\node [style=H box] (91) at (4.5, 3.75) {};
		\node [style=H box] (92) at (7.5, 3.75) {};
		\node [style=gbox] (93) at (4.5, 4.5) {$3$};
		\node [style=none] (94) at (7.5, 4.5) {};
		\node [style=dbspider] (95) at (6, 5.25) {};
		\node [style={rn_phase}] (96) at (6, 6.25) {$\pi$};
		\node [style=gbox] (97) at (7.5, 4.5) {$-1$};
		\node [style=rn] (98) at (2.75, 0) {$\scriptstyle +$};
		\node [style=rn] (99) at (4.25, 0) {$\scriptstyle -$};
		\node [style=rn] (100) at (11, 0) {$\scriptstyle +$};
		\node [style=rn] (101) at (12.5, 0) {$\scriptstyle -$};
		\node [style=none] (102) at (6.5, 7) {};
	\end{pgfonlayer}
	\begin{pgfonlayer}{edgelayer}
		\draw (15.center) to (16.center);
		\draw (17.center) to (14.center);
		\draw (13.center) to (18.center);
		\draw (27) to (19);
		\draw (27) to (20);
		\draw (47.center) to (48.center);
		\draw (49.center) to (46.center);
		\draw (45.center) to (50.center);
		\draw (58) to (51);
		\draw (58) to (52);
		\draw (58) to (53);
		\draw (65.center) to (66.center);
		\draw (67.center) to (64.center);
		\draw (63.center) to (68.center);
		\draw (73) to (69);
		\draw (73) to (70);
		\draw (73) to (71);
		\draw (74) to (73);
		\draw (77.center) to (78.center);
		\draw (79.center) to (76.center);
		\draw (75.center) to (80.center);
		\draw (86) to (81);
		\draw (86) to (82);
		\draw (86) to (83);
		\draw (87) to (86);
		\draw (91) to (58);
		\draw (92) to (27);
		\draw (93) to (91);
		\draw (94.center) to (92);
		\draw [in=90, out=-165, looseness=0.75] (95) to (93);
		\draw [in=90, out=-15, looseness=0.75] (95) to (94.center);
		\draw (96) to (95);
		\draw (27) to (21);
	\end{pgfonlayer}
\end{tikzpicture}
}%
	\endpgfgraphicnamed
}}\\ \hline
    \end{tabular}
    \egroup
\end{center}
For a more detailed reference of Pauli/phase gadgets and how to differentiate them, see \cite{Cowtan_2020} and \cite{koch2022quantum} respectively.

\subsection{General case}
In most of the interesting examples, the Hamiltonian contains some non-commuting terms. Constructing an exact circuit for the OPUG of such Hamiltonians is difficult. We will use the Cayley-Hamilton theorem~\cite{greub1975linearalgebra, cessa2011differential} to construct an exact diagram for the OPUG of the Hamiltonian. But calculating the coefficients in such diagram is computationally hard. Hence, we will present approximate diagrams of the OPUGs via Taylor expansion and Trotterization, which we can use in practical applications.

\subsubsection{Exact exponential diagram}

To give an exact diagram for the exponential, we can use the Cayley-Hamilton theorem~\cite{greub1975linearalgebra, cessa2011differential}. For a $n \times n$ matrix of the Hamiltonian $H$, we can express its exponential as 
\begin{equation}
    e^{-iHt/2} = c_0(t) I + c_1(t) H + \cdots + c_{n-1}(t) H^{n-1}
\end{equation}
where $c_0, \ldots, c_{n-1}$ are some functions of $t$.
Using this equation, we present the exact form of the Hamiltonian exponential function in ZXW:
\begin{equation}
    \scalebox{0.9}{%
	\beginpgfgraphicnamed{exponentials/MatrixExponential}
	\InputIfFileExists{exponentials/MatrixExponential.tikz}{}{\input{./figures/exponentials/MatrixExponential.tikz}}%
	\endpgfgraphicnamed
}
\end{equation}
The coefficients $c_0, \ldots, c_{n-1}$ can be calculated using Putzer's algorithm~\cite{putzer1966exponential}, but its computational complexity is exponential in the number of qubits.

Now, we demonstrate the utility of the ZXW calculus by using its rules to rewrite a Hamiltonian exponential diagram to a quantum circuit.
\begin{example}
    Consider the following Hamiltonian:
    \begin{equation}
        H = aX + bZ
    \end{equation}
    for any complex $a$ and $b$.
    There exists $s_0(t)$ and $s_1(t)$ such that
    \begin{equation}\label{eq:ExtractionHamiltonianExponential}
        \scalebox{0.9}{%
	\beginpgfgraphicnamed{exponentials/ExtractionHamiltonianExponential}
	\InputIfFileExists{exponentials/ExtractionHamiltonianExponential.tikz}{}{\input{./figures/exponentials/ExtractionHamiltonianExponential.tikz}}%
	\endpgfgraphicnamed
}
    \end{equation}
    Using ZXW, we can rewrite it to the following circuit. The full simplification steps are shown in Appendix~\ref{sec:CircuitExtraction}.
    \begin{equation}
        \scalebox{0.9}{%
	\beginpgfgraphicnamed{exponentials/ExtractionHamiltonianCircuit}
	\InputIfFileExists{exponentials/ExtractionHamiltonianCircuit.tikz}{}{\input{./figures/exponentials/ExtractionHamiltonianCircuit.tikz}}%
	\endpgfgraphicnamed
}
    \end{equation}
\end{example}

\subsubsection{Taylor expansion}
We consider Taylor expansion of the exponential to give a simple approximation of the OPUG in terms of ZXW diagram.
The expansion is given as: 
\begin{equation}
    e^{-iHt/2} = \sum_{k=0}^{\infty} \left(\frac{-i\,t}{2}\right)^k \frac{H^k}{k!}
\end{equation}
To construct a ZXW diagram, we need to take a finite number of terms from the expansion. Suppose, we take the $n$-th order approximation of the exponential. We can write it as the following diagram.
\begin{equation}
    \scalebox{1}{%
	\beginpgfgraphicnamed{exponentials/MatrixExponentialTaylor}
	\InputIfFileExists{exponentials/MatrixExponentialTaylor.tikz}{}{\input{./figures/exponentials/MatrixExponentialTaylor.tikz}}%
	\endpgfgraphicnamed
}
\end{equation}

\begin{example}
    Consider the Hamiltonian $H = a Z_1 Z_2 + b Z_1 X_2$. The diagram for this Hamiltonian is
    \begin{equation*}
        \scalebox{0.9}{%
	\beginpgfgraphicnamed{exponentials/MatrixExponentialExampleHamiltonian}
	\InputIfFileExists{exponentials/MatrixExponentialExampleHamiltonian.tikz}{}{\input{./figures/exponentials/MatrixExponentialExampleHamiltonian.tikz}}%
	\endpgfgraphicnamed
}
    \end{equation*}
    The third order approximation of $e^{-iHt/2}$ is the following diagram
    \begin{equation*}
        \scalebox{0.9}{%
	\beginpgfgraphicnamed{exponentials/MatrixExponentialExample}
	\InputIfFileExists{exponentials/MatrixExponentialExample.tikz}{}{\input{./figures/exponentials/MatrixExponentialExample.tikz}}%
	\endpgfgraphicnamed
}
    \end{equation*}
\end{example}

\subsubsection{Trotterization}
Using the Trotter-Suzuki formula, we can approximate the time evolution operator of the Hamiltonian by evolving the system in small time-steps. With small enough time-steps, we can treat the non-commuting terms in the Hamiltonian as commuting terms. Hence, for a Hamiltonian $H = \sum_k H_k$, we get
\begin{equation}
    e^{-iHt/2} = \left(\prod_k e^{-iH_kt/2n}\right)^n + \mathcal{O}\left(\frac{t^2}{n}\right)
\end{equation}
where $n$ is the number of Trotter steps.
This allows us to apply the techniques developed in Section~\ref{sec:HamiltonianCommuting} to construct the diagram for the OPUG of Hamiltonian.
\begin{example}
    Consider the Hamiltonian $H = 3ZY + 2ZZ$ represented by the following diagram.
    \begin{equation*}
        \scalebox{0.9}{%
	\beginpgfgraphicnamed{exponentials/TrotterizationExampleHamiltonian}
	\InputIfFileExists{exponentials/TrotterizationExampleHamiltonian.tikz}{}{\input{./figures/exponentials/TrotterizationExampleHamiltonian.tikz}}%
	\endpgfgraphicnamed
}
    \end{equation*}
    By approximating $e^{-iHt/2}$ with 5 Trotter steps, we obtain the diagram shown below.
    \begin{equation*}
        \scalebox{0.9}{%
	\beginpgfgraphicnamed{exponentials/TrotterizationExample}
	\InputIfFileExists{exponentials/TrotterizationExample.tikz}{}{\input{./figures/exponentials/TrotterizationExample.tikz}}%
	\endpgfgraphicnamed
}
    \end{equation*}
\end{example}

\section{Future work}
In this paper, we give direct representations of controlled diagrams
for a wide class of matrices and show how to sum them. As applications, we express any Hamiltonian with a form of a sum of Pauli operators, including the Hamiltonians used for carbon capture in \cite{Gabriel2022},   convert between a Hamiltonian and its one-parameter unitary group, and represent Taylor expansion and Trotterizatio used for practical Hamiltonian exponentiation in ZXW calculus.

We would like to apply the summation techniques developed in this paper to practical problems, like quantum approximate
optimization, integration on arbitrary ZX diagrams, or quantum machine learning. Also, we would like to develop tools to rewrite Hamiltonian exponentiation diagrams to quantum circuits.

\section*{Acknowledgements}
We would like to thank Gabriel Greene-Diniz and David Zsolt Manrique for providing the Pauli operators form of the Hamiltonian from their paper \cite{Gabriel2022}.  We would also like to thank Bob Coecke, Pablo Andres-Martinez, Boldizsár Poór, Lia Yeh, Stefano Gogioso, Konstantinos Meichanetzidis, Vincent Wang and Robin Lorenz for the feedback on this paper.

\bibliographystyle{eptcs}
\bibliography{generic.bib}

\appendix

\section{Proofs and Lemmas}
In this appendix, we include all the lemmas which have been essentially existed (up to scalars) in previous papers. The lemmas are given in the order which they appear in this paper.

\subsection{Useful lemmas}
We present several useful results in the following table.

{\tabulinesep=1.2mm
\begin{tabu}{|c|c|}
\hline
\begin{minipage}[t]{0.45\linewidth}
  \begin{lemma}\cite{qwangnormalformbit}\\
    For $\tau, \sigma \in \{0, \pi\}$, 
    pink spiders fuse.\\
$$%
	\beginpgfgraphicnamed{zxwintr/redspider0pifusion}
	\InputIfFileExists{zxwintr/redspider0pifusion.tikz}{}{\input{./figures/zxwintr/redspider0pifusion.tikz}}%
	\endpgfgraphicnamed
 (S1r)  $$
    \end{lemma} 
\end{minipage}&
\begin{minipage}[t]{0.45\linewidth}
   \begin{lemma}\cite{qwangnormalformbit}\\
    Hadamard is involutive.
    \vspace{0.25cm}
    $$%
	\beginpgfgraphicnamed{zxwintr/nhsquare2}
	\InputIfFileExists{zxwintr/nhsquare2.tikz}{}{\input{./figures/zxwintr/nhsquare2.tikz}}%
	\endpgfgraphicnamed
 \text{  (H2)}$$
    \end{lemma}
    \vspace{0.20cm}
\end{minipage}\\\hline
\begin{minipage}[t]{0.45\linewidth}
    \begin{lemma}\cite{qwangnormalformbit}\label{zx2elm}\\
      Pink $\pi$ transposes the triangle.\\
      \vspace{0.25cm}
	\beginpgfgraphicnamed{zxwintr/zx2e}
	\InputIfFileExists{zxwintr/zx2e.tikz}{}{\input{./figures/zxwintr/zx2e.tikz}}%
	\endpgfgraphicnamed

      \vspace{0.10cm}
    \end{lemma}
\end{minipage}&
\begin{minipage}[t]{0.45\linewidth}
  \begin{lemma}\cite{qwangnormalformbit}\label{traingleinverse}\\
    Green $\pi$ inverts the triangle.\\
	\beginpgfgraphicnamed{zxwintr/definitionTriangleInverse2}
	\InputIfFileExists{zxwintr/definitionTriangleInverse2.tikz}{}{\input{./figures/zxwintr/definitionTriangleInverse2.tikz}}%
	\endpgfgraphicnamed
 
 \end{lemma}
\end{minipage}\\\hline
\begin{minipage}[t]{0.45\linewidth}
   \begin{lemma}\cite{qwangnormalformbit}\label{trianglerpidotlm}\\
      $(\text{triangle})^T$ stabilises $\ket{1}$.\\
	\beginpgfgraphicnamed{zxwintr/trianglerpidot}
	\InputIfFileExists{zxwintr/trianglerpidot.tikz}{}{\input{./figures/zxwintr/trianglerpidot.tikz}}%
	\endpgfgraphicnamed
 
    \end{lemma}
\end{minipage}&
\begin{minipage}[t]{0.45\linewidth}
  \begin{lemma}\cite{qwangnormalformbit}\label{hopfnslm}\\
      Hopf rule.\\
$$%
	\beginpgfgraphicnamed{zxwintr/hopfns}
	\InputIfFileExists{zxwintr/hopfns.tikz}{}{\input{./figures/zxwintr/hopfns.tikz}}%
	\endpgfgraphicnamed
 (Hopf)$$
\end{lemma}
\end{minipage}\\\hline
\begin{minipage}[t]{0.45\linewidth}
\begin{lemma}\cite{qwangnormalformbit}\label{pimultiplecplm}\\
  $\pi$ copy rule. For $m \geq 0$:\\
$$%
	\beginpgfgraphicnamed{zxwintr/pigrcopy}
	\InputIfFileExists{zxwintr/pigrcopy.tikz}{}{\input{./figures/zxwintr/pigrcopy.tikz}}%
	\endpgfgraphicnamed
 (Pic)$$
\end{lemma}
\end{minipage}&
\begin{minipage}[t]{0.45\linewidth}
\begin{lemma}\cite{qwangnormalformbit}\label{1iprf}\\
$\pi$ commutation rule.\\
	\beginpgfgraphicnamed{zxwintr/picommutationdm}
	\InputIfFileExists{zxwintr/picommutationdm.tikz}{}{\input{./figures/zxwintr/picommutationdm.tikz}}%
	\endpgfgraphicnamed
 
\vspace{0.30cm}
  \end{lemma}
\end{minipage}\\\hline
\end{tabu}}

\begin{lemma}
Suppose $a\neq 0, a\in  \mathbb C$. Then
\[ %
	\beginpgfgraphicnamed{zxw/2formcphasegen}
	\InputIfFileExists{zxw/2formcphasegen.tikz}{}{\input{./figures/zxw/2formcphasegen.tikz}}%
	\endpgfgraphicnamed
 \]
\end{lemma}
This equality can be verified by plugging in the standard basis on the inputs of the diagrams.
Next, we have
\begin{lemma}
  \[ %
	\beginpgfgraphicnamed{zxw/generalpsgte}
	\InputIfFileExists{zxw/generalpsgte.tikz}{}{\input{./figures/zxw/generalpsgte.tikz}}%
	\endpgfgraphicnamed
\]
\end{lemma}
As a consequence of the above two lemmas, we have 
\begin{lemma}
  \[ %
	\beginpgfgraphicnamed{zxw/phasedetriangle}
	\InputIfFileExists{zxw/phasedetriangle.tikz}{}{\input{./figures/zxw/phasedetriangle.tikz}}%
	\endpgfgraphicnamed
 \]
\end{lemma}

\begin{lemma}\label{wfuse1eq}
  \[  %
	\beginpgfgraphicnamed{zxw/wfuse1}
	\InputIfFileExists{zxw/wfuse1.tikz}{}{\input{./figures/zxw/wfuse1.tikz}}%
	\endpgfgraphicnamed
 \]
\end{lemma}
where $a_1, \cdots, a_s$ are arbitrary complex numbers.
\begin{proof}
  \[\scalebox{0.9}{%
	\beginpgfgraphicnamed{zxw/wfuse1_proof}
	\InputIfFileExists{zxw/wfuse1_proof.tikz}{}{\input{./figures/zxw/wfuse1_proof.tikz}}%
	\endpgfgraphicnamed
}\]
\end{proof}

\begin{lemma}\label{psinormalformdifflm}
  \[   \frac{\partial}{\partial t}\left[%
	\beginpgfgraphicnamed{zxw/psinormalform2}
	\InputIfFileExists{zxw/psinormalform2.tikz}{}{\input{./figures/zxw/psinormalform2.tikz}}%
	\endpgfgraphicnamed
\right] = %
	\beginpgfgraphicnamed{zxw/psinormalformdiff2}
	\InputIfFileExists{zxw/psinormalformdiff2.tikz}{}{\input{./figures/zxw/psinormalformdiff2.tikz}}%
	\endpgfgraphicnamed
 \]
\end{lemma}
\begin{proof}
  \[   \frac{\partial}{\partial t}\left[%
	\beginpgfgraphicnamed{zxw/psinormalform2}
	\InputIfFileExists{zxw/psinormalform2.tikz}{}{\input{./figures/zxw/psinormalform2.tikz}}%
	\endpgfgraphicnamed
\right] = %
	\beginpgfgraphicnamed{zxw/psinormalformdiffprf2}
	\InputIfFileExists{zxw/psinormalformdiffprf2.tikz}{}{\input{./figures/zxw/psinormalformdiffprf2.tikz}}%
	\endpgfgraphicnamed
 \]
  \[= %
	\beginpgfgraphicnamed{zxw/psinormalformdiff2}
	\InputIfFileExists{zxw/psinormalformdiff2.tikz}{}{\input{./figures/zxw/psinormalformdiff2.tikz}}%
	\endpgfgraphicnamed
 \]
\end{proof}

\begin{lemma}\label{paulianticommuteeq}
  \[ %
	\beginpgfgraphicnamed{sum/lem2_1}
	\InputIfFileExists{sum/lem2_1.tikz}{}{\input{./figures/sum/lem2_1.tikz}}%
	\endpgfgraphicnamed
 \]
\end{lemma}
This shows that a Hadamard edge is
added to the `body' of the Pauli gadgets when their Hamiltonians anti-commute,
the proof can be found in \cite[Theorem 3]{yeung2020diagrammatic}.

\subsection{Proofs}

\begin{proof}[Proof of \autoref{thm:schro-linear}]
    \[\mkern-24mu %
	\beginpgfgraphicnamed{zxw/sum2soulutionsprf0}
	\InputIfFileExists{zxw/sum2soulutionsprf0.tikz}{}{\input{./figures/zxw/sum2soulutionsprf0.tikz}}%
	\endpgfgraphicnamed
\]
    \[%
	\beginpgfgraphicnamed{zxw/sum2soulutionsprf1v2}
	\InputIfFileExists{zxw/sum2soulutionsprf1v2.tikz}{}{\input{./figures/zxw/sum2soulutionsprf1v2.tikz}}%
	\endpgfgraphicnamed
\]
    \[%
	\beginpgfgraphicnamed{zxw/sum2soulutionsprf2v2}
	\InputIfFileExists{zxw/sum2soulutionsprf2v2.tikz}{}{\input{./figures/zxw/sum2soulutionsprf2v2.tikz}}%
	\endpgfgraphicnamed
\]
    \[%
	\beginpgfgraphicnamed{zxw/sum2soulutionsprf3v2}
	\InputIfFileExists{zxw/sum2soulutionsprf3v2.tikz}{}{\input{./figures/zxw/sum2soulutionsprf3v2.tikz}}%
	\endpgfgraphicnamed
\]
    \[%
	\beginpgfgraphicnamed{zxw/sum2soulutionsprf4v2}
	\InputIfFileExists{zxw/sum2soulutionsprf4v2.tikz}{}{\input{./figures/zxw/sum2soulutionsprf4v2.tikz}}%
	\endpgfgraphicnamed
\]
\end{proof}

\begin{proof}[Proof of \autoref{lem:summatrix}]
    We need check that the controlled matrix $\tilde{M_i}$ represents the controlled matrix of $\bigotimes^m_{j=1} D(a_{ij})$.
    After that, the rest of the proof follows from \autoref{SumControlledMatrix}.
    \ctikzfig{sum/lem3_pf2}
\end{proof}

\begin{proof}[Proof of \autoref{lem:paulisum}]
    In the \autoref{lem:summatrix}, we set
    \begin{equation*}
        a_{ij} = \begin{cases}
            e^{i0} & \text{if }\ P_{ij} = I\\
            e^{i\pi} & \text{otherwise.}
        \end{cases}
    \end{equation*}
    When $a_{ij} = e^{i0}$, the leg on the $j$-th qubit will be disconnected:
    \[ \scalebox{1}{%
	\beginpgfgraphicnamed{sum/thm5_2_1}
	\InputIfFileExists{sum/thm5_2_1.tikz}{}{\input{./figures/sum/thm5_2_1.tikz}}%
	\endpgfgraphicnamed
} \]
    On the other hand, for $a_{ij} = e^{i\pi}$, we get
    \[ \scalebox{1}{%
	\beginpgfgraphicnamed{sum/thm5_2_2}
	\InputIfFileExists{sum/thm5_2_2.tikz}{}{\input{./figures/sum/thm5_2_2.tikz}}%
	\endpgfgraphicnamed
} \]
    Substituting the above in the controlled matrix $\tilde{M_i}$, we get
    \[ \scalebox{1}{%
	\beginpgfgraphicnamed{sum/thm5_2_3}
	\InputIfFileExists{sum/thm5_2_3.tikz}{}{\input{./figures/sum/thm5_2_3.tikz}}%
	\endpgfgraphicnamed
} \]
    where there is a leg on the $j$-th qubit if $P_{ij} \neq I$.
\end{proof}

\begin{proof}[Proof of \autoref{lem:paulicom}]
    \[  %
	\beginpgfgraphicnamed{sum/lem2_pf1}
	\InputIfFileExists{sum/lem2_pf1.tikz}{}{\input{./figures/sum/lem2_pf1.tikz}}%
	\endpgfgraphicnamed
 \]
    Here for the second equality we have 
    \[  %
	\beginpgfgraphicnamed{sum/lem2_pf2}
	\InputIfFileExists{sum/lem2_pf2.tikz}{}{\input{./figures/sum/lem2_pf2.tikz}}%
	\endpgfgraphicnamed
 \]
\end{proof}

\section{Circuit extraction of the exponential from \autoref{eq:ExtractionHamiltonianExponential}} \label{sec:CircuitExtraction}
To simplify this diagram to a circuit, we will use the following two propositions.
\begin{proposition}\label{prop:WGreenEqualsCup}
    \[  %
	\beginpgfgraphicnamed{exponentials/WGreenEqualsCup}
	\InputIfFileExists{exponentials/WGreenEqualsCup.tikz}{}{\input{./figures/exponentials/WGreenEqualsCup.tikz}}%
	\endpgfgraphicnamed
 \]
\end{proposition}
\begin{proof}[Proof of \autoref{prop:WGreenEqualsCup}]
  \[  %
	\beginpgfgraphicnamed{exponentials/WGreenEqualsCupProof}
	\InputIfFileExists{exponentials/WGreenEqualsCupProof.tikz}{}{\input{./figures/exponentials/WGreenEqualsCupProof.tikz}}%
	\endpgfgraphicnamed
 \]
\end{proof}
\begin{proposition}\label{prop:TriangleGreenHad}
    \[  %
	\beginpgfgraphicnamed{exponentials/TriangleGreenHad}
	\InputIfFileExists{exponentials/TriangleGreenHad.tikz}{}{\input{./figures/exponentials/TriangleGreenHad.tikz}}%
	\endpgfgraphicnamed
 \]
\end{proposition}
\begin{proof}[Proof of \autoref{prop:TriangleGreenHad}]
  \[  %
	\beginpgfgraphicnamed{exponentials/TriangleGreenHadProof}
	\InputIfFileExists{exponentials/TriangleGreenHadProof.tikz}{}{\input{./figures/exponentials/TriangleGreenHadProof.tikz}}%
	\endpgfgraphicnamed
 \]
\end{proof}
Now, we begin simplifying~\eqref{eq:ExtractionHamiltonianExponential}.
\begin{align*}
    &%
	\beginpgfgraphicnamed{exponentials/ExtractionSimplify1}
	\InputIfFileExists{exponentials/ExtractionSimplify1.tikz}{}{\input{./figures/exponentials/ExtractionSimplify1.tikz}}%
	\endpgfgraphicnamed
\\[1em]
    &%
	\beginpgfgraphicnamed{exponentials/ExtractionSimplify2}
	\InputIfFileExists{exponentials/ExtractionSimplify2.tikz}{}{\input{./figures/exponentials/ExtractionSimplify2.tikz}}%
	\endpgfgraphicnamed
\\[1em]
    &%
	\beginpgfgraphicnamed{exponentials/ExtractionSimplify3}
	\InputIfFileExists{exponentials/ExtractionSimplify3.tikz}{}{\input{./figures/exponentials/ExtractionSimplify3.tikz}}%
	\endpgfgraphicnamed
\\[1em]
    &%
	\beginpgfgraphicnamed{exponentials/ExtractionSimplify4}
	\InputIfFileExists{exponentials/ExtractionSimplify4.tikz}{}{\input{./figures/exponentials/ExtractionSimplify4.tikz}}%
	\endpgfgraphicnamed

\end{align*}

\end{document}